\documentclass[journal]{IEEEtran}
\newcounter{relctr} 
\everydisplay\expandafter{\the\everydisplay\setcounter{relctr}{0}} 

\newcommand\labelrel[2]{%
  \begingroup
    \refstepcounter{relctr}%
    \stackrel{\textnormal{(\alph{relctr})}}{\mathstrut{#1}}%
    \originallabel{#2}%
  \endgroup
}
\AtBeginDocument{\let\originallabel\label} 

\usepackage{subcaption}
\usepackage{amsmath}
\usepackage{amssymb}
\usepackage{amsthm}
\usepackage{slashbox,pict2e}
\usepackage{xcolor}
\usepackage{float}
\usepackage{stfloats}
\usepackage{graphicx}
\usepackage{lineno,hyperref}
\modulolinenumbers[5]
\usepackage{epstopdf}
\usepackage{graphicx,cite,multirow,rotating}
\usepackage{setspace}
\usepackage{pgfplots} 
\usepackage{tikz}
\usepackage{enumitem}   

\usepackage[utf8]{inputenc}
\usepackage{tabularx}
\usepackage{booktabs}

\usetikzlibrary{shapes,arrows}
\usepackage{varwidth}
\usetikzlibrary{shapes.geometric, arrows}
\usetikzlibrary{positioning,arrows}

\usepackage{amsfonts}
\usepackage{yfonts}
\usepackage{psfrag}
\usepackage{pifont}
\usepackage{amsfonts}
\usepackage{bbm}
\usepackage{dsfont}

\usepackage{color}
\usepackage{amsmath,stackrel,dsfont}
\usepackage{amssymb,hyperref,stmaryrd}
\usepackage{algorithm}
\usepackage{algpseudocode}
\usepackage{pifont}
\modulolinenumbers[5]
\usepackage{amsmath}
\usepackage{epstopdf}
\usepackage{graphicx,cite,multirow,rotating}
\usepackage{setspace}
\usepackage{pgfplots} 
\usepackage{tikz}

\usetikzlibrary{shapes,arrows}
\usepackage{varwidth}
\usetikzlibrary{shapes.geometric, arrows}
\usetikzlibrary{positioning,arrows}
\usepackage{amssymb}
\usepackage{amsfonts}
\usepackage{yfonts}
\usepackage{psfrag}
\usepackage{pifont}
\usepackage{amsfonts}
\usepackage{bbm}
\usepackage{dsfont}
\usepackage{subcaption}
\usepackage{color}
\usepackage{amsmath,stackrel,dsfont}
\usepackage{amssymb,hyperref,stmaryrd}
\usepackage{algorithm}
\usepackage{algpseudocode}
\usepackage{pifont}
\usepackage{amsmath}

\usepackage{titlesec}

\newcommand{\RN}[1]{%
	\textup{\uppercase\expandafter{\romannumeral#1}}%
}

\hyphenation{op-tical net-works semi-conduc-tor}

\newtheorem{Lemma}{Lemma}

\begin{document}

\title{Power Efficient Cooperative Communication within IIoT Subnetworks: Relay or RIS?}

\author{Hamid~Reza~Hashempour, \textit{Member, IEEE},
 Gilberto~Berardinelli, \textit{Senior Member, IEEE},
 Ramoni~Adeogun, \textit{Senior Member, IEEE}, and Eduard~A.~Jorswieck, \textit{Fellow, IEEE}
	\thanks{
 An earlier version of this paper was presented in part at the IEEE Joint European Conference on Networks and Communications \& 6G Summit (EuCNC/6G Summit), Antwerp, Belgium, 2024.
 
Hamid Reza Hashempour, Gilberto Berardinelli and Ramoni Adeogun’s work is 
funded by the HORIZON JU-SNS-2022-STREAM-B-01-03 6G-SHINE
project (Grant Agreement No.101095738).

Hamid Reza Hashempour, Gilberto Berardinelli and Ramoni Adeogun
are with the Department of Electronic Systems, Aalborg University, Aalborg, Denmark
		(e-mails: \{hrh, gb, ra\}@es.aau.dk).

  Eduard A. Jorswieck is with the Institute for Communication Technology, Technische Universitat Braunschweig, Germany (email: 
  e.jorswieck@tu-braunschweig.de).
	}}

\markboth{IEEE ,~Vol.~?, No.~?,  ~}%
{Shell \MakeLowercase{\textit{et al.}}: Bare Demo of IEEEtran.cls for IEEE Journals}

\maketitle

\begin{abstract}
The forthcoming sixth-generation (6G) industrial Internet-of-Things (IIoT) subnetworks are expected to support ultra-fast control communication cycles for numerous IoT devices. However, meeting the stringent requirements for low latency and high reliability poses significant challenges, particularly due to signal fading and physical obstructions. In this paper, we propose novel time division multiple access (TDMA) and frequency division multiple access (FDMA) communication protocols for cooperative transmission in IIoT subnetworks. These protocols leverage secondary access points (sAPs) as Decode-and-Forward (DF) and Amplify-and-Forward (AF) relays, enabling shorter cycle times while minimizing overall transmit power.
A classification mechanism determines whether the highest-gain link for each IoT device is a single-hop or two-hop connection, and selects the corresponding sAP. We then formulate the problem of minimizing transmit power for DF/AF relaying while adhering to the delay and maximum power constraints. In the FDMA case, an additional constraint is introduced for bandwidth allocation to IoT devices during the first and second phases of cooperative transmission. To tackle  the nonconvex problem, we employ the sequential parametric convex approximation (SPCA) method.
We extend our analysis to a system model with reconfigurable intelligent surfaces (RISs), enabling transmission through direct and RIS-assisted channels, and optimizing for a multi-RIS scenario for comparative analysis.
Simulation results show that our cooperative communication approach reduces the emitted power by up to 4.5 dB while maintaining an outage probability and a resource overflow rate below $10^{-6}$. While the RIS-based solution achieves greater power savings, the relay-based protocol outperforms RIS in terms of outage probability.
\end{abstract}

\begin{IEEEkeywords}
Industrial Internet-of-Things subnetwork, cooperative transmission, decode and forward, amplify and forward, reconfigurable intelligent surface.
\end{IEEEkeywords}

\IEEEpeerreviewmaketitle

\section{Introduction}\label{intro}
\IEEEPARstart{T}he next generation of industrial Internet-of-Things (IIoT) systems requires communication with ultra-low latency of $100 \mu s$ and ``cable-like" reliability, achieving availability levels exceeding five nines. These demanding requirements surpass the capabilities of current 5G technology, necessitating a significant transformation in network infrastructure that integrates intelligence and decision-making at the network edge \cite{Adeogun2020, Gilberto}. To address these challenges, researchers and industry professionals are actively exploring short-range, low-power in-X subnetworks designed to replace wired control systems in environments such as robotics and production facilities. The objective of these initiatives is to develop a cost-effective wireless solution that meets the stringent needs of industrial automation \cite{IIoT1,IIoT2,Wang2024,Chen2024}.
Despite the limited propagation range of these networks, achieving the required latency and reliability levels presents challenges due to signal blockage and fading, often caused by metallic obstacles. Relaying has emerged as an effective strategy for reducing fading and enhancing transmission reliability by utilizing spatial diversity, thereby supporting higher data rates and lower latency \cite{URLLC1,Relay1}.

To overcome the effects of blockage and facilitate ultra-short transmission cycles in IIoT subnetworks, incorporating link diversity is essential. Numerous studies have explored the use of relays to enable ultra-reliable low-latency communication (URLLC) networks, as discussed in \cite{Relay1,URLLC1,Khosravirad,Cheng,Ranjha,Joint,Relay-select}. For instance, \cite{URLLC1} offers a comprehensive overview of the performance benefits associated with relaying in 5G URLLC transmissions compared to direct transmission methods. Similarly, \cite{Khosravirad} introduces a wireless communication protocol tailored for industrial networks, which enhances the performance of nodes with weak links through cooperative transmission while using direct transmission for stronger channels. However, this work does not address power control issues or provide an algorithm for classifying  IoT devices\footnote{For simplicity, `device' will be used to refer to IoT devices throughout the remainder of the paper.} into single-hop and two-hop categories. An alternative approach, presented in \cite{Cheng}, proposes an algorithm aimed at minimizing transmit power in smart factories, focusing on throughput and reliability constraints rather than low latency. Additionally, \cite{Ranjha} investigates the application of mobile airborne relays to enable short control packet transmissions between controllers and multiple mobile robots, although this optimization emphasizes minimizing overall decoding error probability rather than reducing power emissions.
Furthermore, the authors in \cite{Joint} propose a method for jointly selecting an optimal relay vehicle to enhance channel conditions and control power, ensuring communication link reliability for effective broadcast communication in platooning scenarios. The issue of relay selection has also been examined in the context of RF energy harvesting \cite{Relay-select}. 
Recent studies have addressed the challenges of power and frequency allocation for 6G industrial subnetworks in \cite{Power-Alloc} and \cite{Freq-Alloc}, focusing on these aspects independently. However, these works have not considered the potential benefits of cooperative communication for improving URLLC performance.

While previous research has examined relay-assisted URLLC transmission, including aspects such as relay selection and power control, a notable gap remains in the literature regarding a holistic approach that integrates transmission protocol design, relay selection, and power control, specifically tailored to the stringent requirements of IIoT subnetworks. These subnetworks demand ultra-low latency of $100 \mu s$ and service availability exceeding five nines. Furthermore, existing studies have largely overlooked the critical issue of minimizing power emissions, which is essential for extending battery life and reducing interference with neighboring subnetworks \cite{Gilberto}.

Reconfigurable Intelligent Surfaces (RIS), which can modify wireless propagation environments, have gained recognition as a promising method for boosting link performance in difficult scenarios, while also being cost-efficient and energy-saving \cite{Mu,Liu}. In the context of 6G subnetworks, RIS offers an alternative approach to enhance wireless control by deploying multiple surfaces to reduce signal blockage and improve URLLC performance \cite{Zhang}. As demonstrated in \cite{Zhang}, distributed RIS deployments outperform colocated RIS configurations. However, there is a lack of direct performance comparisons between systems that use multiple access points (APs) as relays and those that employ RISs.

\subsection{Motivations and Contributions}
As discussed in the introduction, the URLLC requirements of IIoT subnetworks, particularly the need for ultra-short communication cycles, demand innovative cooperative transmission protocols.
Existing literature lacks a comprehensive approach integrating transmission protocol design, relay selection, and power control for these demanding settings.
In our previous work \cite{EuCNC}, we introduced a novel time division multiple access (TDMA) protocol to address these requirements. Building upon that foundation, this paper expands the approach by incorporating both amplify-and-forward (AF) and DF relaying. Additionally, we propose a frequency division multiple access (FDMA) protocol designed for cooperative communication. Our aim is to enhance reliability while minimizing transmit power, addressing the unique challenges encountered in IIoT environments.
The proposed solution leverages secondary APs (sAPs) to relay packets from devices experiencing poor propagation conditions to a primary AP (pAP), which includes embedded controller functionalities. This configuration ensures robust communication even in the presence of signal blockages and fading—issues commonly encountered in industrial settings.
We present a thorough analysis of the framework by examining both DF and AF relaying schemes in the context of TDMA and FDMA. 

Another gap in the literature is the lack of direct performance comparisons between systems using multiple APs as relays and those employing RISs. While studies \cite{RIS-or-relay1,RIS-or-relay2,RIS-or-relay3} have addressed comparisons between RISs and relays, none have focused on minimizing transmit power under the stringent URLLC requirements of IIoT environments. Our work bridges this gap by addressing both power minimization and the use of multiple relays or RISs, setting it apart from existing studies.
To address this, we extend our analysis to an equivalent system model where sAPs are replaced with RISs, facilitating transmission through both direct device-to-pAP channels and device-RIS-pAP channels. This analysis also includes addressing the challenges associated with CSI estimation in the RIS-based setup.
The main contributions of this work are as follows:
\begin{itemize}
\item We propose novel transmission protocols for both TDMA and FDMA configurations in DF/AF relaying systems designed for URLLC applications. In these protocols, devices are dynamically scheduled into single-hop or two-hop transmission modes, utilizing the most suitable sAP based on real-time channel state information (CSI), while adhering to the strict timing constraints (of IIoT subnetworks).

\item We extend our analysis to an equivalent system model where sAPs are replaced with RISs. In this scenario, transmission occurs through both the direct device-to-pAP channel and the combined device-RIS-pAP channel. We also address the challenge of CSI estimation for this setup.

\item We formulate and address the critical problem of minimizing total transmit power while ensuring adherence to stringent communication cycle time requirements. This optimization problem is developed for both DF/AF relaying systems and distributed RIS configurations, providing a unified framework for power efficiency across different setups.

\item To tackle the inherent non-convexity of this optimization problem, we utilize a sequential parametric convex approximation (SPCA) method, which is well-suited to complex, non-linear optimization scenarios. In the FDMA cases, we further introduce a sequential decoupling approach to optimize bandwidth allocation by isolating specific parameters. This dual approach significantly enhances overall system performance by efficiently managing power and bandwidth resources.

\item We provide extensive simulation results to characterize the performance of the proposed protocols, particularly in terms of power efficiency. Additionally, we conduct a comparative analysis of our proposed cooperative method against a scenario with multiple RISs, focusing on power savings and reliability.
\end{itemize}

To the best of our knowledge, no prior work has integrated transmission protocols for DF/AF relaying in both TDMA and FDMA configurations under stringent timing constraints, while comparing them with RIS scenario and addressing the joint classification of devices and power minimization strategies within IIoT  subnetworks.

\subsection{Organization and Notation}
The remainder of this paper is organized as follows: Section~\ref{Sys_Model} describes the system model. Section~\ref{Comm_protocol} and Section~\ref{RIS-model} detail the proposed communication protocols for relay-assisted and RIS-assisted networks, respectively. In Section~\ref{opt} and \ref{RIS-opt}, we provide a comprehensive explanation of the power minimization algorithms tailored for relay, and RIS-aided scenarios, respectively. Section~\ref{Simulat} presents simulation results that demonstrate the effectiveness of our approach. Finally, the conclusions are summarized in Section~\ref{conc}.

\textit{Notation}: We use bold lowercase letters for vectors and bold uppercase letters for matrices. The notation $(\cdot)^T$ and $(\cdot)^H$  denote the transpose operator and the conjugate transpose operator, respectively.  $\triangleq$ denotes a definition. $\mathbb{R}^{N \times 1}$ and $\mathbb{C}^{N \times 1}$ denote the sets of $N$-dimensional real and complex vectors, respectively.  $\mathbb{C}^{M \times N}$ stands for the set of $M \times N$ complex matrices.   $\mathrm{diag}\{\cdot\}$ constructs a diagonal matrix from its vector argument.

Table \ref{Table-1} presents the main parameters and variables associated with this study in order to enhance the readability of the paper. 
\begin{table}
	\small
	\renewcommand{\arraystretch}{1.3}
	\caption{Key notations used in this paper.}
	\centering
	\label{Table-1}
	\resizebox{\columnwidth}{!}{
		\begin{tabular}{|c|p{60mm}|}
			\hline
			$\mathbf{Notation}$  &  $\mathbf{Definition}$ \\
			\hline
             $\mathcal{N} /N /n$  &  Set/number/index of devices \\		\hline
			$\mathcal{K} /K /k$  &  Set/number/index of sAPs (or RISs) in the uplink transmission \\		\hline
			$\mathcal{N}_{1h}/ \mathcal{N}_{2h}$ &  The set of devices scheduled for single-hop/two-hop transmission \\ \hline
 $D_n/\mathcal{D}$  & The strongest sAP for the $n$-th device/the set of all selected sAPs
\\ \hline   
		$h^d_n /h^s_{n,D_n}$	  & 	The  channel  of $n$-th devices to the pAP/sAP
   \\		\hline
    $h^a_{D_n}$ & The channel vector between the pAP and $D_n$th sAP
   \\		\hline
$\mathbf{h}_{n,k}/ \mathbf{h}^r_k$ & Channel responses from device $n$ to
 RIS $k$/ from RIS $k$ to the pAP
    \\		\hline $P_n$, $P_k^s$
    & The transmit power of $n$-th/$k$-th device/sAP (W).
    \\		\hline $B_n$; $W$
    & Packet size of $n$-th device (Bit); available bandwidth (Hz).
    \\		\hline $T$/$L$
    & Uplink timeslot (s)/ number of training symbols.
    \\		\hline
     $\beta_{n}$/$\beta^s_{n}$ & The bandwidth allocation parameter
for device $n$ in first/second phase of transmission
\\		\hline
			$\mathcal{J}_k /J_k /j$ &  Set/number/index of $k$-th RIS elements  \\ 	\hline
			$\boldsymbol{\Phi}_k/\boldsymbol{\phi}_k$ &  The diagoal matrix/vector of the phase shift of the $k$-th RIS    \\ 		\hline
        $\sigma_0$; $\sigma_e$  & AWGN noise power (W); channel estimation error power (W).
			    \\ 		\hline
		$\theta$	& Discount factor for the rate of data transmission
			\\		\hline
		\end{tabular}}
	\end{table}

\section{Setting Up the Scene}\label{Sys_Model}
We explore two cooperative transmission scenarios within an IIoT subnetwork: one using multiple APs and the other using multiple RISs to enhance transmission. Each model is detailed below.

\subsection{Multiple APs}
In the first scenario, an IIoT subnetwork consists of $N$ devices wirelessly connected to $K$ sAPs and one pAP. The pAP issues control commands to the devices, while the $K$ sAPs provide radio communication support. The sets of all sAPs and devices are represented by $\mathcal{K}$ and $\mathcal{N}$, respectively. All devices and APs are equipped with a single antenna. The system model for uplink (UL) transmission with multiple APs is illustrated in Fig.~\ref{fig1a}.
In this setup, devices communicate either directly with the pAP in a single-hop mode or relay their messages through an sAP to enable a two-hop cooperative transmission. Devices using single-hop transmission are denoted by $\mathcal{N}_{1h}$, while those in two-hop mode are denoted by $\mathcal{N}_{2h}$.
An example subnetwork with five devices and three APs is shown in Fig.~\ref{fig1a}, where devices D and E use single-hop transmission, while devices A, B, and C use two-hop cooperative mode. In the two-hop case, devices transmit their packets directly to the pAP in the first phase and via an sAP in the second, allowing the pAP to combine signals received directly with those relayed by the sAP.
We consider both DF and AF relaying methods. Each relaying strategy includes two configurations: (1) a TDMA setup with devices assigned time slots to avoid intra-cell interference and (2) an FDMA setup with orthogonal subchannels allocated within a common time slot.
In both configurations, we focus on UL transmissions, where devices send packets of $B_n$ bits over $W$ Hz bandwidth to the APs. All packets must reach the primary AP within a $T$-second time slot.
\subsection{Multiple RISs}
The second scenario, illustrated in Fig.~\ref{fig1}, replaces the sAPs with RISs. Here, we have one pAP and $K$ distributed RISs, each with $J_k$ reflecting elements. Devices communicate with the pAP through both direct links and cascaded device-RIS-pAP links.
We assume that the pAP has CSI estimates for all communication links, including device-to-pAP/sAP/RIS links and sAP/RIS-to-pAP links. These CSI estimates are obtained through pilot transmission during a training phase, as outlined in \cite{Khosravirad}. For static devices, the CSI remains relatively stable, allowing less frequent training. Interference from neighboring subnetworks is assumed to be negligible due to effective network-level radio resource management.
The pAP scheduler uses the CSI to determine whether devices should operate in single-hop or two-hop transmission modes. It also optimizes rate, time and frequency resource allocation, and transmit power by solving an optimization problem.
\begin{figure} 
	\centering
	\subfloat[\label{fig1a}]{%
		\includegraphics[width=0.92\linewidth]{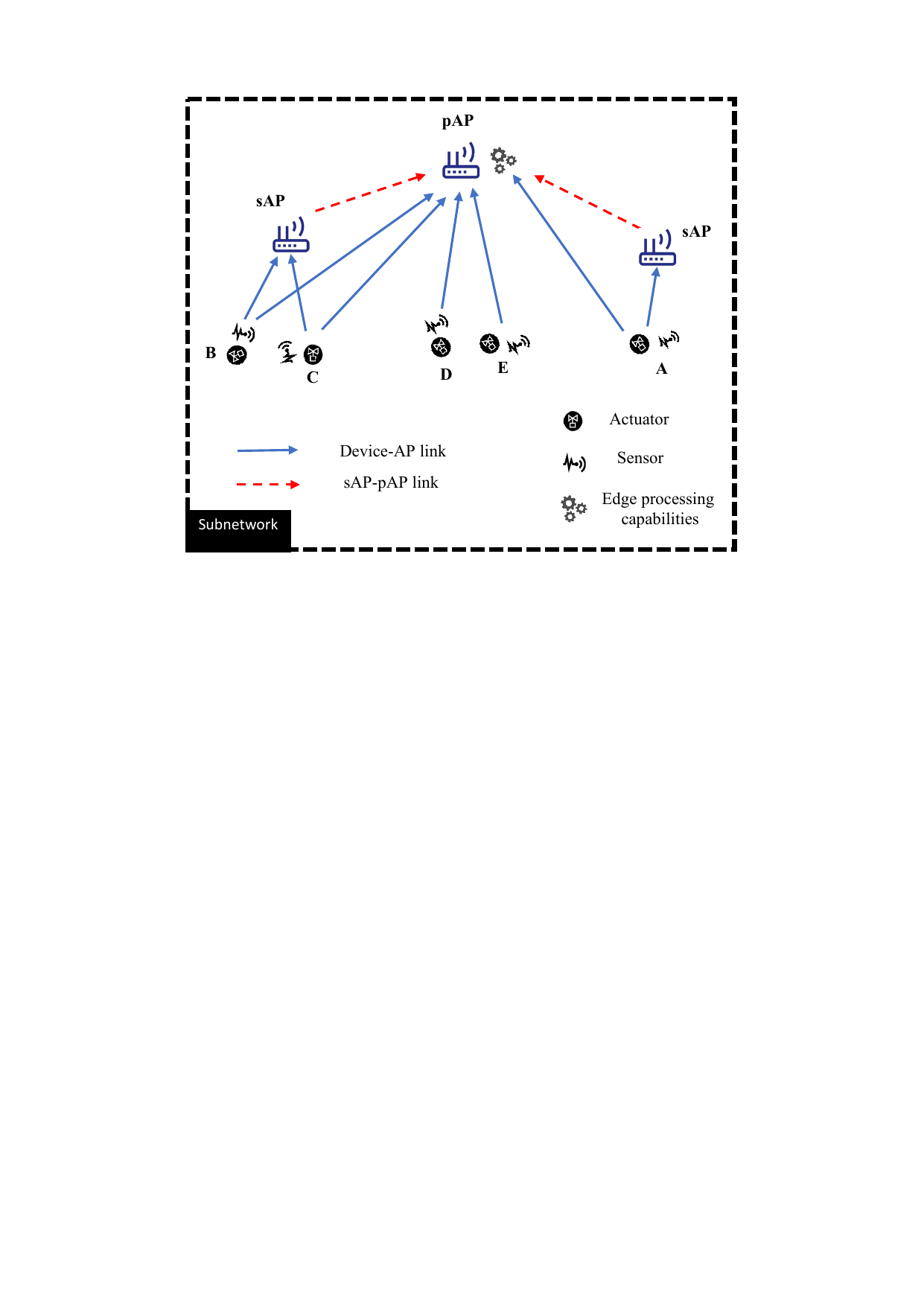}}
	\hfill
	\subfloat[\label{fig1}]{%
		\includegraphics[width=0.92\linewidth]{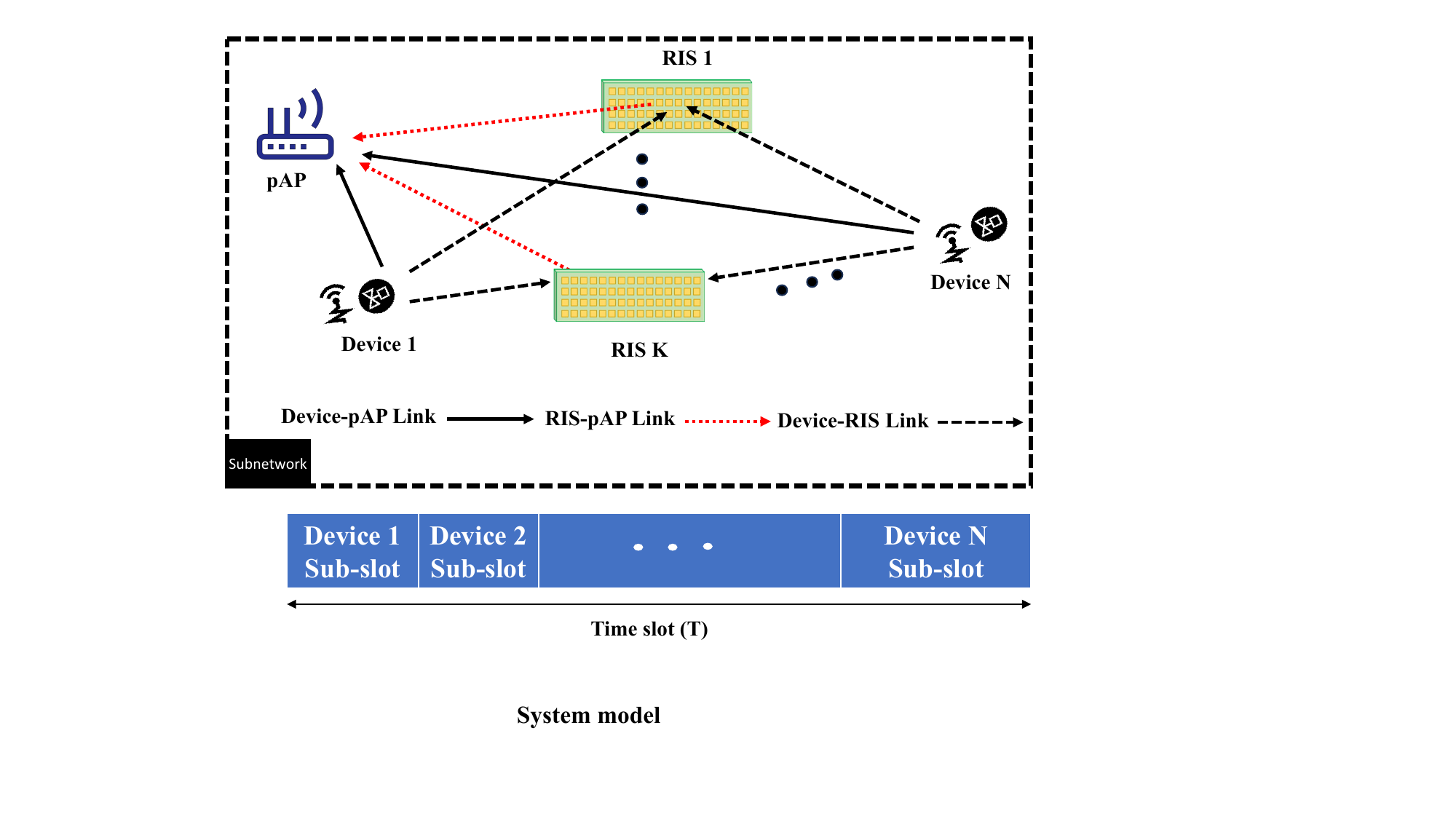}}
	\caption{(\protect\subref{fig1a}) System model for transmission with multiple APs and (\protect\subref{fig1}) multiple RISs in the subnetwork.}
		\label{fig-sys}
\end{figure}
To clarify the communication process and the role of each component, we provide a step-by-step overview (Fig.~\ref{process_fig}):
\begin{itemize}
\item \textbf{Channel Estimation:} Devices and sAPs send pilot sequences, enabling the pAP to estimate channel quality. Based on these estimates, the pAP assigns devices to single-hop or two-hop transmission modes to balance efficiency and reliability.

\item \textbf{Resource Management:} The pAP manages resources including time slots, bandwidth, and power allocation for UL transmission. Resource allocation is dynamically adjusted for single-hop and two-hop devices to optimize power efficiency and latency.

\item \textbf{UL Transmission:} Devices initiate data transmission based on the pAP's scheduling. Devices assigned to two-hop transmission first send data to both an sAP and the pAP, while single-hop devices transmit  directly to the pAP.

\item \textbf{Relaying:} Selected sAPs relay signals using AF or DF methods. During the second phase of cooperative transmission, sAPs forward the received signals to the pAP, completing the two-hop transmission process.
\end{itemize}
\begin{figure}
	\centering
		\includegraphics[width=0.92\linewidth]{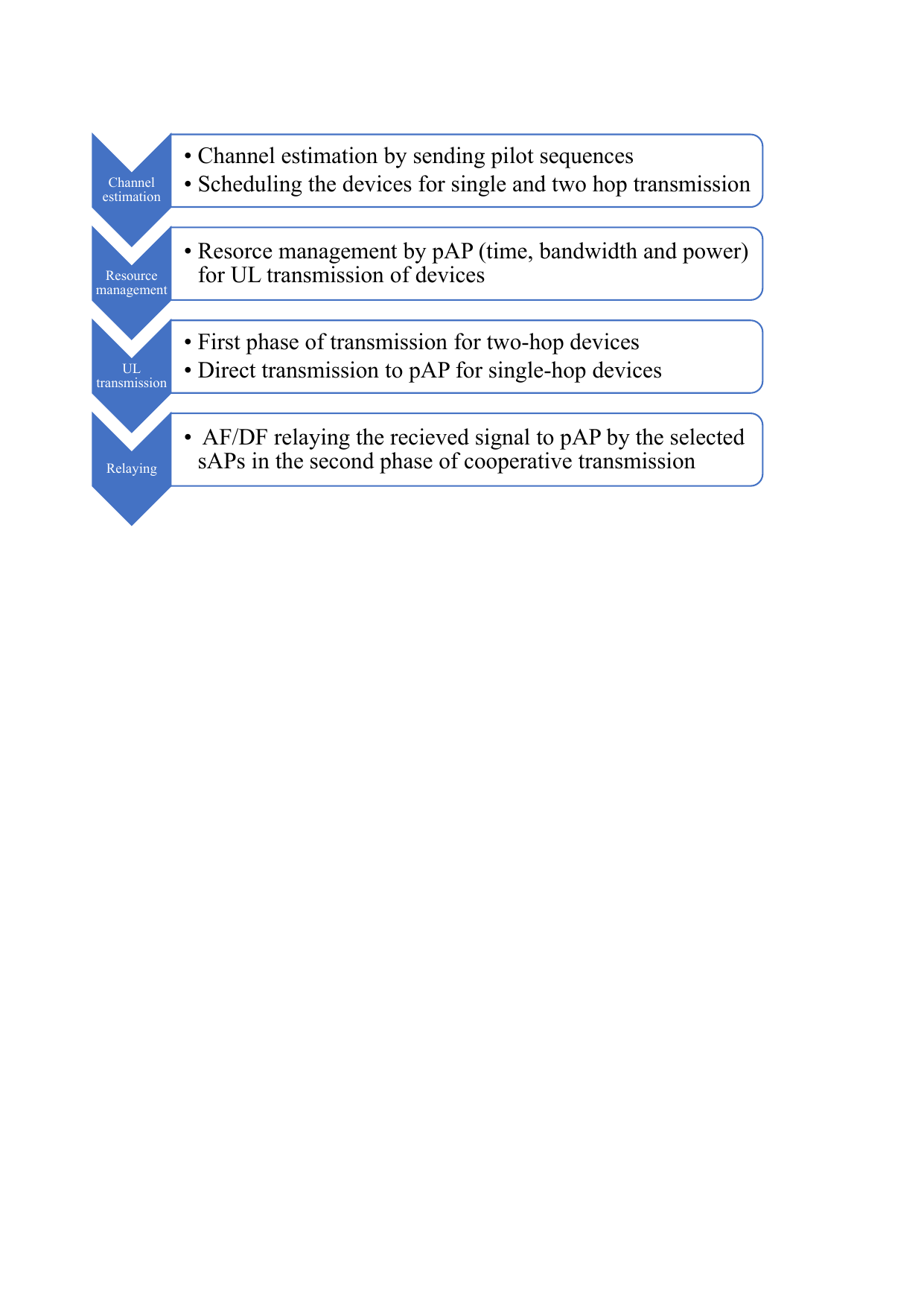}
	\caption{Overview of the communication process.}
	\label{process_fig}
 \end{figure}
\section{Proposed Communication Protocols for Relay-assisted Network}\label{Comm_protocol}
In this section we elaborate our communication protocols in TDMA and FDMA cases, respectively.
It is our hypothesis that communication reliability in a subnetwork can be enhanced by the presence of sAPs and their forwarding capabilities. Exploiting the presence of the sAPs requires a tailored communication protocol. 
    
Our proposed TDMA protocol is shown in Fig.~\ref{fig2}. The UL timeslot is divided into three variable-duration sub-slots: the first phase of two-hop transmissions, single-hop transmissions, and the second phase of two-hop transmissions. Devices in the set $\mathcal{N}_{2h}$ transmit during the first sub-slot, with their signals received by both the pAP and the sAP. Each device in $\mathcal{N}_{2h}$ is served by the sAP with the best channel conditions.
In the second phase sub-slot, the sAP forwards the received messages to the pAP, acting as a DF/AF relay. The pAP combines the energy from the first-phase signal with that forwarded by the sAP before decoding. Single-hop transmissions are only received by the pAP.
To allow sufficient processing time at the sAPs, devices in two-hop mode are scheduled with a gap between receiving and retransmitting packets. This ensures efficient data processing without extending the timeslot duration.
It is worth observing that, given the need of accommodating the two phases of the cooperative operation in the UL slot, transmission intervals of devices in $\mathcal{N}_{2h}$ are necessarily shorter than in the basic TDMA protocol, and therefore a higher transmission rate is needed for transmitting their packet. 

\begin{figure} 
	\centering
	\subfloat[\label{fig2}]{%
		\includegraphics[width=0.920\linewidth]{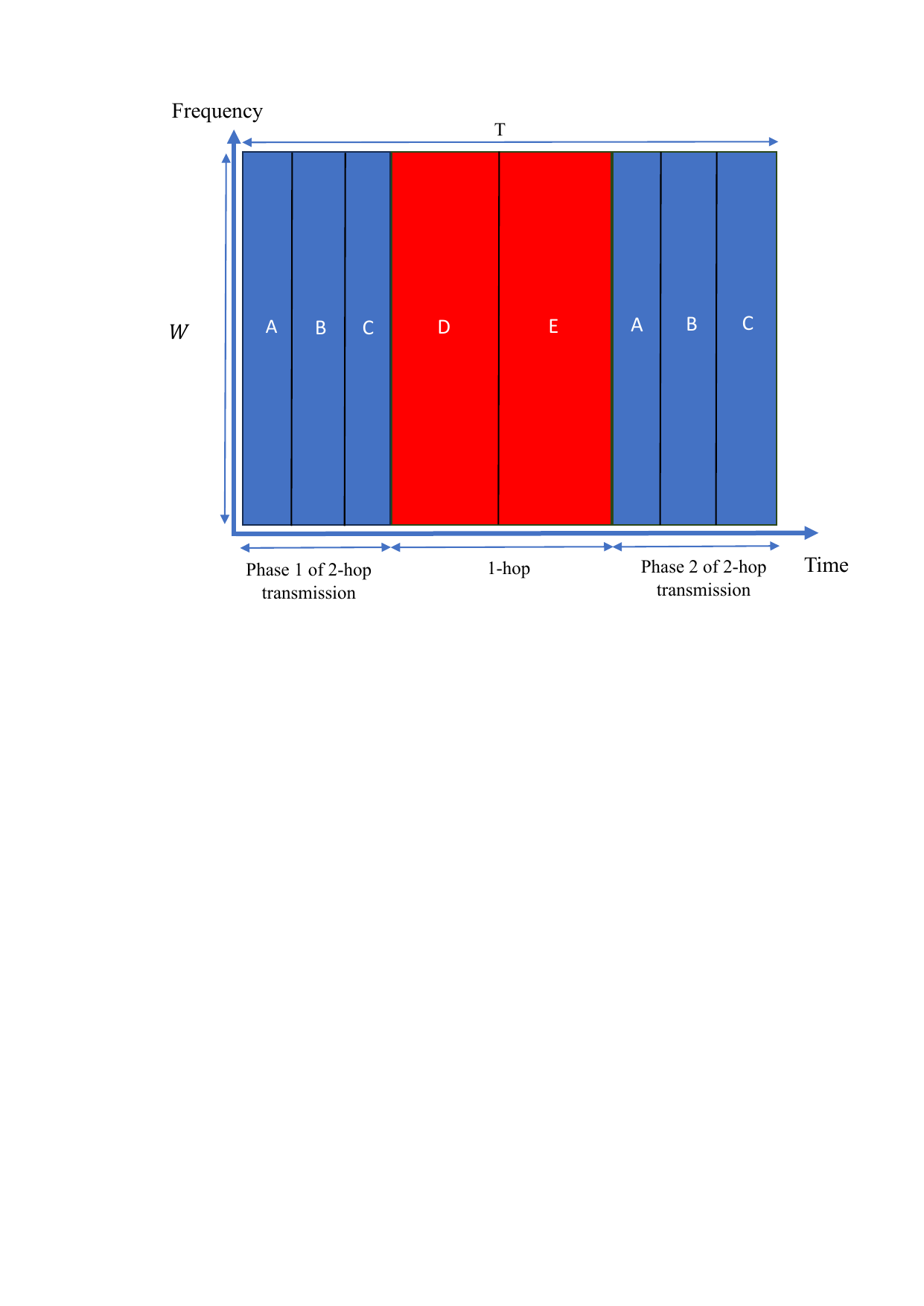}}
	\hfill
	\subfloat[\label{fig4}]{%
		\includegraphics[width=0.920\linewidth]{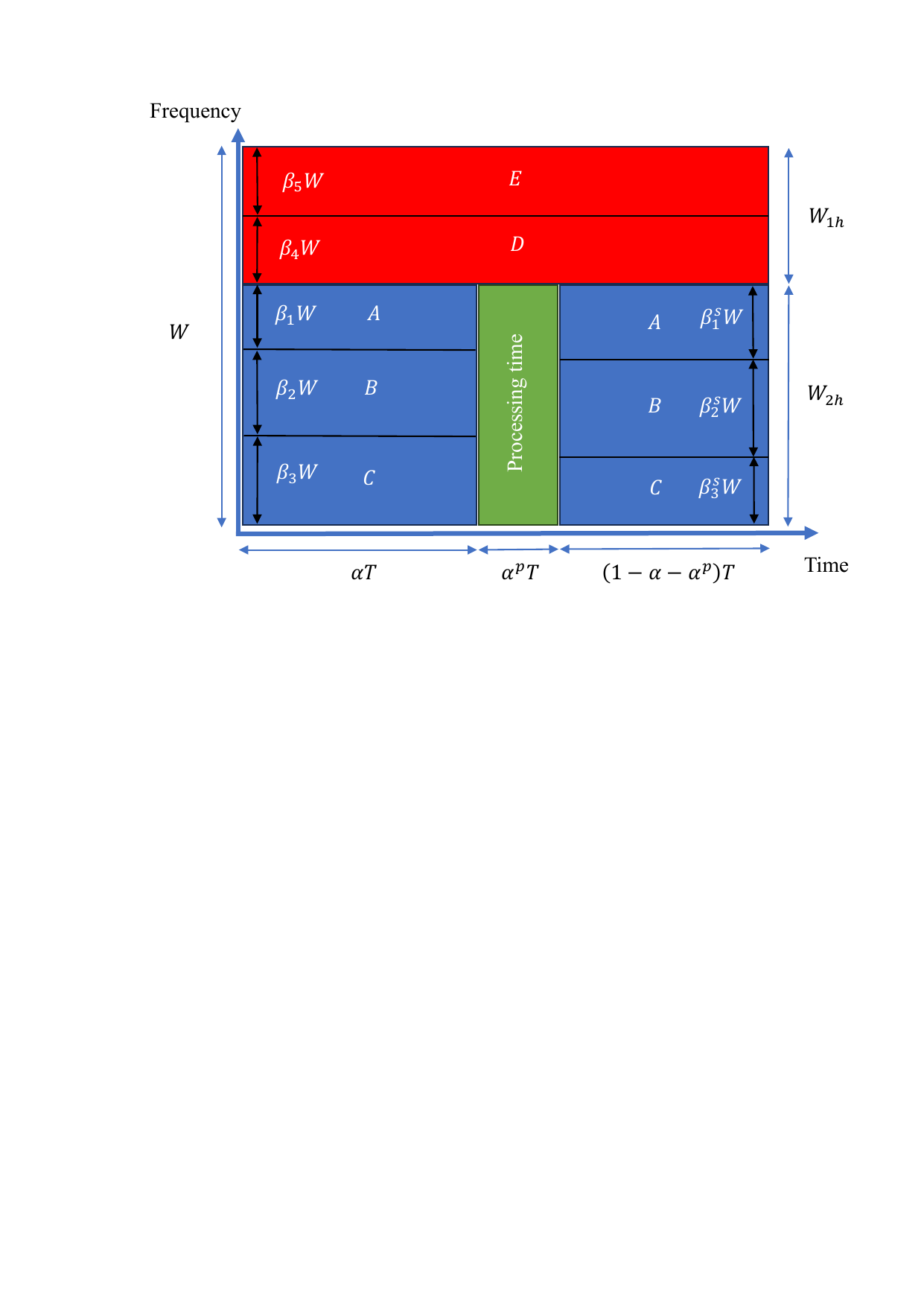}}
	\caption{(\protect\subref{fig2}) Proposed cooperative TDMA  and (\protect\subref{fig4})  FDMA protocols for relay-assisted network.}
		\label{fig6}
\end{figure}

Conversely, in the case of FDMA, all devices use the same time slot for transmission, but the bandwidth is divided among the devices to prevent interference.
In Fig.\ref{fig4}, the proposed cooperative FDMA protocol is illustrated. It can be observed that for two-hop transmission in the FDMA case, a processing time is added to account for DF processing at the sAPs.
$\beta_{n}$ and $\beta^s_{n}$ in Fig.\ref{fig4} denote the fraction of bandwidth allocated
to device $n$ in the first and second phase of cooperative transmission, respectively. Due to limited bandwidth, we have $\sum_{n \in \mathcal{N}} \beta_{n} = 1$
and $\sum_{n \in \mathcal{N}_{2h}}(\beta^s_{n} -\beta_{n})\leq  0 $. It is worth noting that in TDMA, since the total bandwidth is used by all devices, we set $\beta_{n}=1$ and $\beta^s_{n}=1$.

In the following, we present a model for channel estimation error, and the signal model of single-hop and two-hop DF relaying. 
For the signal model, we consider the following notation. The vector of channel responses between the pAP and each of $K$ sAPs is denoted by $h^a_{k}$. The channel of pAP link to each of $N$ devices is represented by $h^d_n$. The channel from the device $n$ to $k$-th sAP is $h^s_{n,k}$.

\subsection{Channel Modeling and Estimation in Relay Scenario}\label{ch-est}
Given the small size of subnetworks \cite{Adeogun2020, Gilberto} and the close proximity of devices to their associated APs, line-of-sight (LOS) conditions are highly probable in dense small cell network environments. This suggests that Rician fading channel models are more suitable than Rayleigh models for accurately representing multipath channel characteristics in such deployments\cite{Jafari}. For this reason, we adopt the Rician fading model in this study, utilizing the 3GPP channel model  for the indoor factory scenario as outlined in \cite{3GPP}. Additionally, we account for the impact of imperfect CSI (I-CSI) by incorporating channel estimation based on pilot training sequences for each device. Suppose each device uses $L$ training symbols, with a duration of $T_p = L T_s$, where $T_s = 1/W$ is the symbol period and $W$ is the total transmission bandwidth. The pilot training time for all devices is subtracted from the total time slot $T$, resulting in $T' \triangleq T - N L T_s$, where $T'$ represents the remaining time available for UL transmission.
Assuming that recursive minimum mean-square-error (MMSE) channel estimation is used, the true Rician fading channel gain $h$ can be written as $h = \hat{h} + \epsilon$, where $\epsilon \sim \mathcal{CN}(0,\sigma_e(L))$, $\hat{h} \sim \mathcal{CN}(0,1-\sigma_e(L))$, \cite{Khosravirad} and
\begin{align}\label{MMSE}
\sigma_e(L) = \dfrac{1}{1+L\cdot \mathrm{SNR}},
\end{align}
where $\sigma_e(L)$ denotes the variance of the channel estimation
error \cite{Khosravirad}.
To reduce the impact of outages caused by inaccuracies in channel estimation, the transmitter can lower the data transmission rate by applying a discount factor $\theta$, where $0 < \theta < 1$.

\subsection{Achievable Rate for Single-hop Transmission}
For simplicity of notation let us consider the perfect CSI (P-CSI) case, i.e. the pAP
scheduler has perfect knowledge of the channel responses of
all the links for making  its decision. In the I-CSI case discussed in the previous section, the estimated channels ($\hat{h}$) are used in place of the perfect channels ($h$), and the achievable rate is adjusted by a factor $\theta$ to account for the uncertainty introduced by the channel estimation error ($\epsilon$).

The Signal to Noise Ratio (SNR) of the $n$-th device in the direct link to pAP is given by:
\begin{align} 
g_{n}^d&=\dfrac{P_n|h^d_n|^2}{\beta_n\sigma_0},\ \ \forall  n \in \mathcal{N}_{1h}, 
\end{align}
where the superscript $d$ stands for the direct link, $\sigma_0$ denotes power of
the additive white Gaussian noise (AWGN), and $P_n$ is the transmit power of device $n$. 
The achievable information rate of the $n$-th device in the direct link to pAP can be written as:
\begin{align}
r_{n}^d &= W \beta_{n} \mathrm{log}_2 \left( 1+ g_{n}^d  \right), \ \ \forall  n \in \mathcal{N}_{1h}. 
\end{align}
Assuming time division multiplexing only over a bandwidth $W$, a packet of $B_n$ bits for the $n$-th device can be transmitted in a time $t_{n}^d = \dfrac{B_n}{ r_{n}^d }$. The packets of all single-hop devices can be then transmitted in a TDMA fashion resulting in a total transmission time: 
\begin{align}
 T_{1h}^T &= \sum_{n \in \mathcal{N}_{1h}} t_{n}^d.
\end{align}
In FDMA, all devices transmit simultaneously, thus the total transmission time of all devices is 
\begin{align}
 T_{1h}^F &= \max_{  n \in \mathcal{N}_{1h} } \ t_{n}^d. 
\end{align}

\subsection{Achievable Rate for DF Relaying}
In the DF relaying model, each device’s transmission is first received by an sAP, which decodes and then forwards it to the pAP. We propose a relaying strategy called the `$1$ of $K$' method, where each device is assigned to the sAP with the strongest channel conditions (i.e., highest channel gain) during both phases of transmission. This strategy enhances communication reliability and coverage. The sAP with the strongest signal for the $n$-th device is denoted by $D_n$, where $D_n \in \mathcal{D} \subseteq \mathcal{K}$. The SNR and achievable rate for the $n$-th device at the $D_n$-th sAP are specified accordingly.
\begin{align}
g^s_{n,D_n}&=\dfrac{P_n|h^s_{n,D_n}|^2}{\beta_{n}\sigma_0},\ \ \forall  n \in \mathcal{N}_{2h}, \\
r_{n,D_n}^{(1)} &= W \beta_{n} \mathrm{log}_2 \left( 1+ g^s_{n,D_n}  \right), \ \ \forall  n \in \mathcal{N}_{2h},
\end{align}
where the superscript ${(1)}$ indicates the first phase in the two-hop cooperative transmission.
Assuming the sAP re-encodes and forwards the received signal to the pAP, the SNR and the achievable rate from the $D_n$-th sAP to the pAP are given by the following expressions:
\begin{align}\label{eq7}
g^a_{D_n}&=\dfrac{P_{D_n}^s|h^a_{D_n}|^2}{\beta^s_{n}\sigma_0},\ \forall  n \in \mathcal{N}_{2h},  \\
r^{(2)}_{n} &=  W \beta^s_{n} \mathrm{log}_2 \left( 1+ g^a_{D_n} + g_{n}^d \right),\ \forall  n \in \mathcal{N}_{2h},
\label{eq7}
\end{align}
where the superscript ${(2)}$ indicates the second phase in the two-hop cooperative transmission. 
Additionally, $P_{D_n}^s$ represents the transmit power of the $D_n$-th sAP. It is important to note that \eqref{eq7} accounts for coherent combining of the two transmission phases at the pAP receiver. For DF relaying, the signal sent by device $n$ must be correctly decoded by the strongest sAP and re-encoded into a new message. The total over-the-air time required for device $n$ to successfully transmit a packet of $B_n$ bits is then given by:
\begin{align}\label{tn}
t_n^{(2h)}&=\dfrac{B_n}{  \displaystyle r_{n,D_n}^{(1)} }+\dfrac{B_n}{  r_n^{(2)}},\ \forall  n \in \mathcal{N}_{2h}.
\end{align}
We denote respectively the total time of transmission for all devices in the first and second phase of two-hop method in TDMA case as: 
\begin{align}\label{eq10}
	\allowdisplaybreaks
T_{2h}^{(1)} & = \sum_{n \in \mathcal{N}_{2h}} \dfrac{B_n}{  r_{n,D_n}^{(1)} }, \\
T_{2h}^{(2)} & = \sum_{n \in \mathcal{N}_{2h}} \dfrac{B_n}{  r_n^{(2)}}.
\end{align}
The total time of transmission for all single-hop and two-hop devices in TDMA mode is calculated by the summation of $T_{1h}^{T}$, $T_{2h}^{(1)}$, and $T_{2h}^{(1)}$. In addition,
for FDMA case the transmission time for each device in two-hop category is calculated individually from \eqref{tn}, thus we define
\begin{align}
T_{2h}^{F}= \max_{  n \in \mathcal{N}_{2h} } (t_n^{(2h)}+t_n^{p}),
\end{align}
where $t_n^{p}$ is the processing time of device $n$.
To summarize we define the transmission time of our proposed method as:
\begin{align}
T^{DF}=
\begin{cases}
T_{1h}^T +T_{2h}^{(1)} + T_{2h}^{(2)}, & \text{for TDMA mode,}\\
\max \ (T_{1h}^F,T_{2h}^{F}) , & \text{for FDMA mode.}
\end{cases}
\end{align}

Every device expects independent $B_n$ bits
of data to be delivered in $T$ seconds over a bandwidth of $W$ Hz.  The total transmission time must not exceed the predefined time $T$, i.e., $T^{DF} \leq T$. Otherwise, a time overflow (resource overflow) occurs. We utilize the \textit{overflow rate}  to quantify the occurrence of events where $T^{DF} > T$.
Note that time overflow is the only source of errors in the P-CSI case. However, in the I-CSI case, we also consider the probability of outage. The outage probability for single-hop and two-hop devices can be calculated, respectively, as follows:
\begin{align}
P_{\text{out}}^{1h} &= \mathrm{Pr}\left(\exists i \in \mathcal{N}_{1h}, \theta \cdot \hat{r}^d_i > r^d_i\right), \nonumber \\
P_{\text{out}}^{2h} &= \mathrm{Pr}\left(\exists i \in \mathcal{N}_{2h}, \theta \cdot \hat{r}^{(1)}_{i,D_i} > r^{(1)}_{i,D_i} \cup \theta \cdot \hat{r}^{(2)}_{i} > r^{(2)}_{i}\right)
\end{align}
where $P_{\text{out}}^{1h}$ and $P_{\text{out}}^{2h}$ represent the outage probability for single-hop and two-hop communication, respectively. $\hat{r}^d_i$, $\hat{r}^{(1)}_{i,D_i}$ and $\hat{r}^{(2)}_{i}$ denote the estimated rates for the direct link and the first and second cooperative links, respectively.
 
\subsection{Achievable rate for AF Relaying}
In this section we consider the standard AF protocol. 
In AF transmission, the relay only amplifies and retransmits the received signals without any decoding.
The received signal at the destination through the direct link is denoted as:
\begin{align}
y_{n,p} &= \sqrt{P_n}h^d_n x_n + z_0, \ \forall  n \in \mathcal{N}.
\end{align}
The received signals in source-relay, and relay-destination links, are respectively as:
\begin{align}
y_{n,D_n} &= \sqrt{P_n}h^s_{n,D_n} x_n + z_1, \ \forall  n \in \mathcal{N}_{2h},  \\ 
y_{D_n,p} &= \mu_{n} \sqrt{P^s_{D_n}} h^a_{D_n} y_{n,D_n}+ z_2,\ \forall  n \in \mathcal{N}_{2h},  
\end{align}
where $x_n$, $z_1$,  $z_2$, and $\mu_n$ are the transmitted signal,  the additive white Gaussian noise (AWGN) noise at the relay and pAP, and the amplification factor, respectively.

Assuming that the sAPs replicate sequentially the signal received by each device and energy combining is performed at the receiver, it can be shown that the SNR  can be  written as \cite{AF-ref}:
\begin{align}
g_{n,D_n}^{\mathrm{AF}}&=\dfrac{P_{n} P_{D_n}\mu_{n}^2 |h^a_{D_n}|^2 |h^s_{n,D_n}|^2}{(1+\mu_{n}^2 P_{D_n} |h^a_{D_n}|^2)\beta_{n}\sigma_0},\  \forall  n \in \mathcal{N}_{2h}, 
\label{12}
\end{align}
It is worth noting that in AF relaying, the bandwidth remains unchanged during the relaying stage. Therefore, $\beta_{n}$ is used as the bandwidth allocation parameter in both phases.
In order to have a constant power constraint we set $\mu_{n} = \sqrt{1/(P_{n}|h^s_{n,D_n}|^2+\beta_{n}\sigma_0)}$. Thus, $\forall  n \in \mathcal{N}_{2h}$ the SNR and the achievable rate are respectively obtained as:
\begin{align}
g_{n,D_n}^{\mathrm{AF}}&=\dfrac{P_{n}P_{D_n} |h^a_{D_n}|^2 |h^s_{n,D_n}|^2}{(P_{n}|h^s_{n,D_n}|^2+P_{D_n}|  h^a_{D_n}|^2 +\beta_{n}\sigma_0)\beta_{n}\sigma_0},
\nonumber \\ 
r_n^{\mathrm{AF}} &= \dfrac{W \beta_n}{2} \mathrm{log}_2 \left(1+ g_{n}^d + g_{n,D_n}^{\mathrm{AF}}\right). \label{rate-AF}
\end{align}
In \eqref{rate-AF}, the factor of $1/2$ in the data rate arises because two consecutive time slots of the same duration are allocated for each packet.
Then, the total time of transmission in AF case then can be calculated as:
\begin{align}
T^{AF}=
\begin{cases}
T_{1h}^T + \displaystyle\sum_{n \in \mathcal{N}_{2h}} \dfrac{B_n}{  r_n^{\mathrm{AF}}}, & \text{for TDMA mode,}\\
\max \ (T_{1h}^F,\displaystyle\max_{  n \in \mathcal{N}_{2h} } \dfrac{B_n}{  r_n^{\mathrm{AF}}} ), & \text{for FDMA mode.}
\end{cases}
\end{align}
The overflow rate for AF relaying is calculated similarly to that of DF relaying. The outage probability for AF relaying is given by: 
\begin{align}
P_{\text{out}}^{AF} &= \mathrm{Pr}\left(\exists i \in \mathcal{N}_{1h}, \theta \cdot \hat{r}^{\mathrm{AF}}_i > r^{\mathrm{AF}}_i\right).
\end{align}

\section{Proposed Communication Protocol for Multiple RIS-assisted Network}\label{RIS-model}
In the previous section, we considered cooperative communication using multiple sAPs to enhance URLLC transmission. Here, we substitute each sAP with an RIS, keeping the same configuration, and elaborate on the UL transmission scenario.
Therefore, consider an RIS-assisted UL channel, as depicted in Fig. \ref{fig1}, which includes one pAP, a set $\mathcal{N}$ of $N$ devices, and a set $\mathcal{K}$ of $K$ RISs. Each device and the pAP are equipped with a single antenna. Each RIS, indexed by $k \in \mathcal{K}$, contains $J_k$ reflecting elements and is configured to facilitate communication between the pAP and the devices.
\begin{figure}
	\centering	\includegraphics[width=0.920\linewidth]{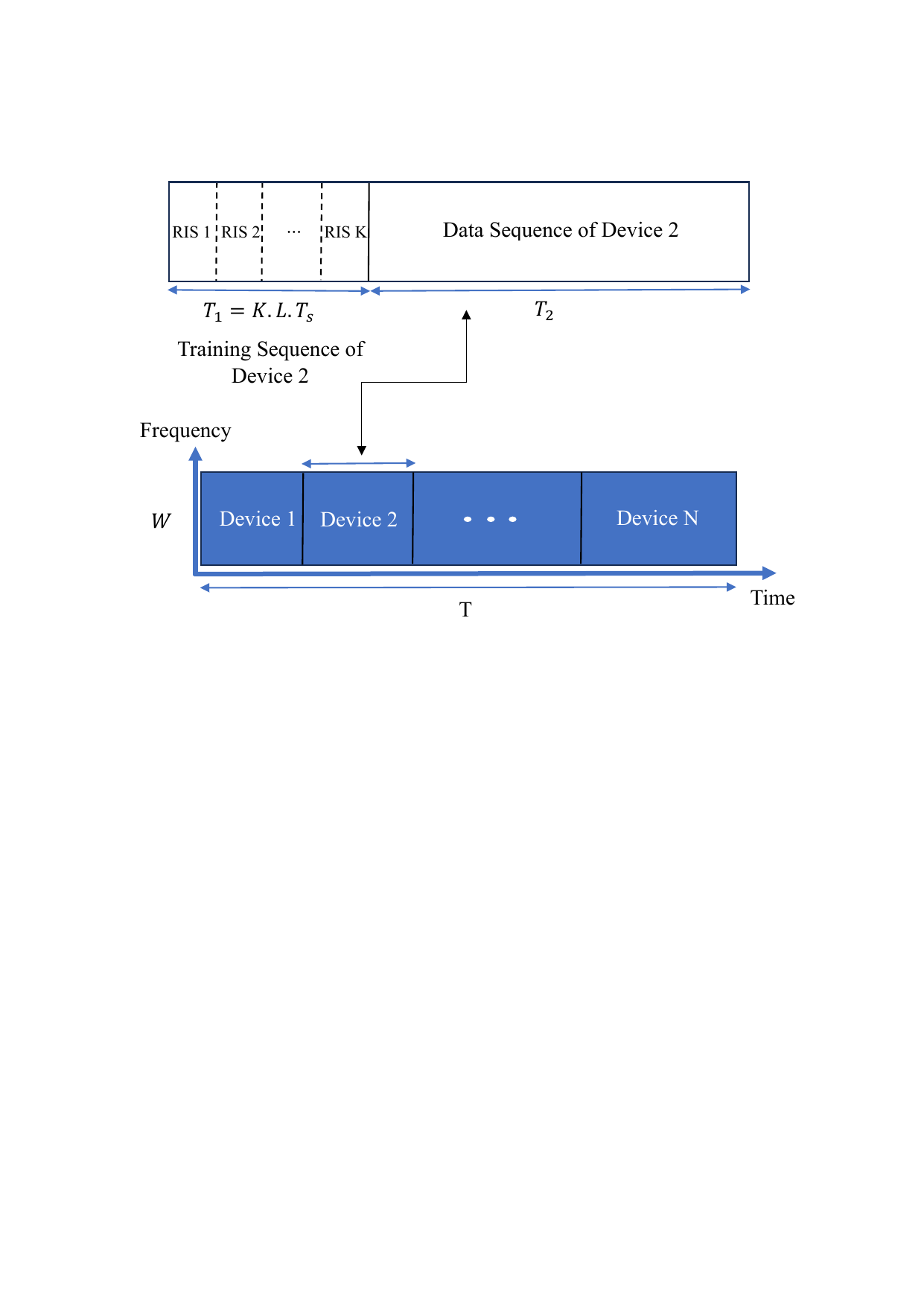}
	\caption{Protocol for UL transmission and channel estimation with multiple RISs.}
	\label{RIS_training}
\end{figure}

We assume that each RIS functions as an ideal passive element, implying it does not introduce frequency selectivity. Consequently, our model is designed to focus exclusively on TDMA transmission\footnote{It is important to note that FDMA significantly degrades performance compared to TDMA, as shown in \cite{Multi-RIS}. Moreover, while \cite{Multi-RIS} introduces a method for joint optimization in a downlink scenario, this falls outside the scope of our paper. Due to space constraints, we do not cover it here but plan to address it in future research.}.
The transmitted signal from device $n$
is $\sqrt{P_n} x_{n}$.
The phase shift
matrix of RIS $k$ can be optimized through a diagonal matrix
    $\boldsymbol{\Phi}_k = \mathrm{diag}(e^{j\phi_{k1}},\cdots,e^{j\phi_{kJ_k}})\in \mathbb{C}^{J_k \times J_k} $
with $\phi_{kj} \in [0,2\pi]$, $k \in \mathcal{K}$, and $j \in \mathcal{J}_k= \{1,2,\cdots, J_k \} $, where $\boldsymbol{\Phi}_k$ captures the
effective phase shifts applied by all reflecting elements of RIS $k$. Thus, the received signal of $n$-th device at the destination through the direct link and multiple RISs is denoted as:
\begin{align}\label{ris-eq}
y_{n} &= \sqrt{P_n}\left((h^{d}_{n})^* + \sum_{k=1}^{K} \mathbf{h}^H_{n,k} \boldsymbol{\Phi}_k \mathbf{h}^r_k  
\right)  x_n + z_0, \ \forall  n \in \mathcal{N}
\end{align}
where $h^{d}_{n}$, $\mathbf{h}_{n,k} \in \mathbb{C}^{J_k}$, and $\mathbf{h}^r_k \in \mathbb{C}^{J_k}$ , respectively,
denote the channel responses from device $n$ to the pAP, from device $n$ to
 RIS $k$, and from RIS $k$ to the pAP, and $z_0$
is the additive white Gaussian noise.
Due to the fact that the
signal through multiple RISs is weak, we ignore the multi-hop
transmission among multiple RISs in \eqref{ris-eq} \cite{RIS1}.
The received SNR of device $n$ at pAP is:
\begin{align} 
g^r_{n}&=\dfrac{ \left\vert (h^{d}_{n})^* + \sum_{k=1}^{K} \mathbf{h}^H_{n,k} \boldsymbol{\Phi}_k \mathbf{h}^r_k  
\right\vert^2 P_n}{\sigma_0},
\end{align}
As a result, the corresponding rate of device $n$ is:
\begin{align}
r_{n} &=W \mathrm{log}_2 \left( 1+ g^r_{n} \right). 
\end{align}

\subsection{Channel Estimation in RIS Scenario}\label{RIS Ch-est}
We assume that all the channels are subject
to quasi-static fading and hence the channel coefficients remain
constant within one channel coherence interval.
A key property for channel estimation in RIS-aided systems is that there is a scaling ambiguity issue that prevents the RIS-pAP channel and the device-RIS channel from being individually identifiable\cite{RIS-est}.

By rewriting the cascaded channel in \eqref{ris-eq} we have: $\mathbf{h}^H_{n,k} \boldsymbol{\Phi}_k \mathbf{h}^r_k = \boldsymbol{\phi}_k^T \mathbf{u}_{n,k}$ where $\mathbf{u}_{n,k}= \mathrm{diag}(\mathbf{h}^H_{n,k})\mathbf{h}^r_k \in \mathbb{C}^{J_k} $ and $\boldsymbol{\phi}_k=[e^{j\phi_{k1}},\cdots,e^{j\phi_{kJ_k}}]^T$.
Hence, as it has been shown in \cite{RIS-est}, the cascaded channel $\mathbf{u}_{n,k}$
and the direct channel $h^{d}_{n}$ are sufficient for designing RIS-aided
communications. As a result, most of the existing contributions
are focused on designing algorithms to estimate the cascaded channels and the direct channel $h^{d}_{n}$ separately \cite{RIS-est}.

Since a TDMA scheme is utilized, channel estimation is performed separately for each device.
Furthermore, by utilizing a method similar to the discrete Fourier transform (DFT) training scheme proposed in \cite{EG2}, we estimate the cascaded channel for each RIS individually. Specifically, drawing inspiration from the on-off scheme used in \cite{on-off}, we activate the $k$-th RIS while deactivating all others.
Fig. \ref{RIS_training} illustrates the process of channel estimation over time. Under this assumption, we rewrite \eqref{ris-eq} to estimate the channel for the $k$-th RIS as follows:
\begin{align}
y_{n,k} &= \sqrt{P_n} [1, \boldsymbol{\phi}_k^T]
  \Tilde{\mathbf{u}}_{n,k} x_n+ z_0, \ \forall  n \in \mathcal{N},
\end{align}
where $\Tilde{\mathbf{u}}_{n,k} = [ h^{d}_{n},\mathbf{u}_{n,k}]$.
Our aim is to estimate $\Tilde{\mathbf{u}}_{n,k}$ based on the pilot vector $\boldsymbol{\phi}_k$ that is
assumed to be known and depends on the pilot sequence $L_k$ being
used for channel estimation. To ensure that $\Tilde{\mathbf{u}}_{n,k}$ can be uniquely
estimated,  the number
of time slots for channel training of $k$-th RIS, need to satisfy the condition
$L_k \geq (J_k + 1)$.
In general, based on \cite{RIS-est}, there are two common methods for estimation: 1) Least Squares (LS) Estimator and 2) Linear Minimum Mean-Squared-Error (LMMSE).
The LMMSE method has a smaller estimation error as compared to the LS method since prior knowledge of the
channel distributions is used in this approach.
Based on LMMSE, the error variance of the channels when all the channels undergo uncorrelated
Rician fading, is given by \cite{RIS-est}:
\begin{align}\label{LMMSE-e}
\mathrm{var}(\hat{h}_{n}^d) = \mathrm{var}(\hat{\boldsymbol{u}}_{n,k}) = \dfrac{1}{1+L_k\cdot P_n/\sigma_0}, \ \forall k \in \mathcal{K}
\end{align}
where $\sigma_0$ is the noise power. Interestingly, the error variance obtained from LMMSE method in equation \eqref{LMMSE-e} is the same as \eqref{MMSE} in the case of multiple APs.
Therefore, in special case that $J_k=J$ and $L_k =L, \ \forall k \in \mathcal{K}$ the total training sequence time for device $n$ as shown in Fig. \ref{RIS_training} is obtained by:
\begin{align}
T_1 =  K \cdot L \cdot T_s,
\end{align}
which requires a large amount of pilot training overhead, by increasing $K$ and $J_K$ compared to relay.
In typical setups, this results in an excessive training overhead, e.g., when the number of reflecting elements $J_k$ is large. In this case, the remaining time slots for data transmission are, in fact, significantly reduced. \footnote{To reduce the pilot overhead, the element grouping (EG) method is used in \cite{EG1,EG2}. In practical RIS-aided communication systems, the EG method groups the adjacent elements and assigns the same reflection pattern to them. This scheme is effective when the reflecting elements are installed closely together.}

For the rest of discussion, we assume the LMMSE estimators with $L_k = (J_k + 1)$.

\section{Proposed Method for Transmit Power Optimization with Multiple APs}\label{opt}
In IIoT subnetworks, minimizing the overall emitted power extends device battery time, and reduces interference in neighbor subnetworks. In this section, we propose a method for minimization of the total transmit power of all devices and APs, while coping with the delay constraints presented in the previous sections.
The optimization problem can be posed as
\begin{subequations} \label{P1}
	\allowdisplaybreaks
	\begin{align} &\min_{\mathbf{P},\mathcal{N}_{1h},\mathcal{N}_{2h},\mathcal{D}}\left( \sum_{n \in \mathcal{N}} P_n + \sum_{n \in \mathcal{N}_{2h}}P_{D_n}^s \right), \label{P1-a}\\
	\mathrm{s.t.}\ & 
T^{UL}\leq  T', \label{P1-b}
\\&P_n \leq P_{\max}, \ \forall n \in \mathcal{N}\label{P1-c}\\
	& 
	P_{D_n}^s \leq P_{\max}, \ \forall n \in \mathcal{N}_{2h}\label{P1-d}	
	\end{align} 
\end{subequations}
where $\mathbf{P} = [P_1, P_2, \ldots, P_N, P_{D_1}^s, P_{D_2}^s, \ldots, P_{D_{N_{2h}}}^s]$ represents the vector of transmit powers for all devices, with $P_{\max}$ denoting the maximum permissible transmission power. Constraint \eqref{P1-b} addresses the requirement for low latency, while \eqref{P1-c} and \eqref{P1-d} set the power limits.

\subsection{Solution in DF Relaying} \label{solution1}
\subsubsection{TDMA Case}
Given the complexity of jointly selecting relays and minimizing power in \eqref{P1}, finding the optimal solution directly is challenging. Therefore, we employ a sequential method that first identifies the optimal transmission link using a fixed power allocation and then subsequently minimizes the transmit power.
 By assuming constant power, we reformulate the objective to identify the link that maximizes transmission rate, thereby reducing total delay. Following \cite{Laneman}, the average rate in DF relaying is given by $1/2 \cdot \min \left\{ r_{n,D_n}^{(1)}, r_n^{(2)} \right\}$. Given $P_n$ and $P_{D_n}^s$, we compute the minimum channel gain over both transmission phases for all potential relays, selecting the link with the highest gain, i.e., we have:
\begin{align}\label{relay-select}
&g^{2h}_{n,k} \triangleq \frac{1}{2}\min \left\{|h_k^a|^2, |h^s_{n,k}|^2\right\},  \ \forall n \in \mathcal{N}, \forall k \in \mathcal{K} \\
&g^{2h}_n \triangleq \max[g^{2h}_{n,1},g^{2h}_{n,2},\cdots,g^{2h}_{n,K}], \ \forall n \in \mathcal{N}
\label{16}
\end{align}
This metric is subsequently compared with the direct channel gain $|h^d_n|^2$ to ascertain whether direct or cooperative transmission is more advantageous.
Should two-hop transmission prove preferable, the sAP exhibiting the maximum channel gain will be chosen for cooperative transmission. The detailed procedure is presented in Algorithm \ref{Alg1}.
\begin{algorithm}
	\caption{Algorithm for classification of devices and relay selection in DF case}
	\label{Alg1}
	\begin{algorithmic}[1]
		\State \textbf{Input:}
		The channel gains $h_k^a$, $h^s_{n,k}$, and $h^d_n$    for all $N$ devices and $K$ sAPs.
		\For {$n= 1: N$} 
		\State calculate $g^{2h}_n$ from  \eqref{16}
            \If{$|h^d_n|^2 \geq g^{2h}_n$}
                \State $n \rightarrow \mathcal{N}_{1h}$ 
             \Else
                 \State $n \rightarrow \mathcal{N}_{2h}$
                 \State select relay: $ \arg \displaystyle\max_k \{g^{2h}_{n,k}\} \rightarrow {D}_{n}$
            \EndIf
		\EndFor
		\State Output: $\mathcal{N}_{1h}$, $\mathcal{N}_{2h}$, $\mathcal{D}$.
	\end{algorithmic}
\end{algorithm}

Given the sets $\mathcal{N}_{1h}$, $\mathcal{N}_{2h}$, and $\mathcal{D}$, we proceed to minimize transmit power in \eqref{P1}.
In this case we sustitude $T^{UL}$ in \eqref{P1-b} as follows:
\begin{align}\label{P2-b}
 T^{UL} &=
\sum_{n \in \mathcal{N}_{1h}} \dfrac{B_n}{W\theta	 \mathrm{log}_2 \left( 1+ \hat{g}_{n}^d  \right)} \\ \nonumber & + \sum_{n \in \mathcal{N}_{2h}} \dfrac{B_n}{W \theta \displaystyle \mathrm{log}_2 \left( 1+ \hat{g}_{n,D_n}  \right) }\\ \nonumber& +\sum_{n \in \mathcal{N}_{2h}} 
\dfrac{B_n}{W \theta \mathrm{log}_2 \left( 1+  \hat{g}_{D_n,p}  + \hat{g}_n^d\right)}   
\end{align}
It is important to note that problem \eqref{P1} with the surrogate $T^{UL}$ from \eqref{P2-b} is formulated for the general I-CSI case. For the P-CSI case, $\theta = 1$ and $T' = T$.
Problem \eqref{P1} is hard to solve because of the non-convex constraint \eqref{P1-b}, and thus finding the global optimum is generally intractable. To circumvent the non-convexity, we resort to SPCA where the problem is iteratively approximated by a sequence of convex programs. At each iteration, the non-convex constraint is replaced by convex surrogate that serves as approximation.
Thus, let us rewrite the \eqref{P1} as:
\begin{subequations} \label{P3}
	\allowdisplaybreaks
	\begin{align} 
	&\min_{\mathbf{P}}\left( \sum_{n \in \mathcal{N}} P_n + \sum_{n \in \mathcal{N}_{2h}}P_{D_n}^s \right), \label{P3-a}\\
	\mathrm{s.t.}\ & 
\sum_{n \in \mathcal{N}_{1h}} \dfrac{B_n}{\gamma_n^d} +\sum_{n \in \mathcal{N}_{2h}}  B_n \left ( \dfrac{1}{\gamma_{n,D_n}^{(1)}}+\dfrac{1}{ \gamma_{n}^{(2)}} \right)
\leq T' W\theta,\label{P3-b} 
\\&
\mathrm{log}_2 \left( 1+ \hat{g}_{n}^d  \right) \geq \gamma_{n}^d, \ \ \forall  n \in \mathcal{N}_{1h},\label{P3-c}
\\&
\mathrm{log}_2 \left( 1+ \hat{g}_{n,D_n}  \right) \geq \gamma_{n,D_n}^{(1)}, \  \forall  n \in \mathcal{N}_{2h},   \label{P3-d}
\\&
\mathrm{log}_2 \left( 1+  \hat{g}_{D_n,p}  + \hat{g}_{n}^d \right) \geq \gamma_{n}^{(2)}, \   \forall  n \in \mathcal{N}_{2h},\label{P3-e}
\\&
\eqref{P1-c}-\eqref{P1-d}	
	\end{align} 
\end{subequations}
where $\gamma_n^d$, $\gamma_{n,D_n}^{(1)}$, and $\gamma_{n}^{(2)}$ are auxiliary variables to approximate the non-convex terms with convex counterparts. 
It can be perceived that $\gamma_n^d$, $\gamma_{n,D_n}^{(1)}$, and $\gamma_{n}^{(2)}$ play the roles of lower bound for $\mathrm{log}_2 \left( 1+ \hat{g}_{n}^d  \right)$, $\mathrm{log}_2 \left( 1+ \hat{g}_{n,D_n}  \right)$, and $\mathrm{log}_2 \left( 1+ \hat{g}_{D_n,p}  + \hat{g}_{n}^d  \right)$, respectively. 
Increasing the lower-bound values and simultaneously
reducing the upper-bounds will boost the left-side of
the constraints, which is needed here, so that the constraints
\eqref{P3-b}-\eqref{P3-e} would be active at the optimum.
The \eqref{P3-b} is convex, since it is a linear combination of three quadratic terms over linear functions that is convex \cite{CVX}. 
Affine approximations of constraints \eqref{P3-c}-\eqref{P3-e}, $\forall n \in \mathcal{N}$ are given by:
\begin{subequations}\label{lem1-formul}
	\allowdisplaybreaks
         \begin{align}
	&1 + \rho_{n} - 2^ {\gamma_n^d}  \geq 0,\ \forall  n \in \mathcal{N}_{1h},
	\\&
 \rho_{n} \leq \dfrac{P_n|h^d_n|^2}{\sigma_0},\ \forall  n \in \mathcal{N}_{1h},
	\\&
	1 + \psi_{n} - 2^ {\gamma_{n,D_n}^{(1)} }  \geq 0,\ \forall  n \in \mathcal{N}_{2h}, 
	\\&
	 \psi_{n} \leq \dfrac{P_{n}
\displaystyle |h^s_{n,D_n} |^2}{\sigma_0},\ \forall  n \in \mathcal{N}_{2h} \\&
1 + \zeta_{n} - 2^ {\gamma_{n}^{(2)} }  \geq 0,\ \forall  n \in \mathcal{N}_{2h}, 
	\\&
	\zeta_{n} \leq  \dfrac{P_{D_n}^s|  h^a_{D_n}|^2
 +P_{n}|h^d_n|^2}{\sigma_0},  \forall n \in \mathcal{N}_{2h},
	\end{align} 
\end{subequations}
where $\rho_{n}$, $\psi_{n}$, and $\zeta_{n}$, are auxiliary variables.

Thus, by replacing constraints \eqref{P3-c}--\eqref{P3-e} with \eqref{lem1-formul}, the optimization problem \eqref{P3} transforms into a standard convex semidefinite programming (SDP). This can be efficiently solved using numerical solvers, such as the SDP tool in CVX \cite{CVX}.
\subsubsection{FDMA Case}
 Given the complexity of jointly selecting relays and minimizing power in \eqref{P1}, our approach first identifies the optimal transmission link before minimizing transmit power.
By assuming constant power and uniform bandwidth allocation, we reformulate the objective to identify the link that maximizes transmission rate, thereby reducing total delay. It is equivalent to compute the maximum channel gain for all sAPs and pAP, and selecting the link with the highest gain, based on \eqref{relay-select}. Thus, Algorithm \eqref{Alg1} is used to select the strongest link  for transmission.

Given the sets $\mathcal{N}_{1h}$, $\mathcal{N}_{2h}$, and $\mathcal{D}$, we proceed to minimize transmit power in \eqref{P1}. This is achieved by simplifying the optimization problem outlined in \eqref{P1} as follows:
\begin{subequations} \label{36}
	\allowdisplaybreaks
	\begin{align} &\min_{\mathbf{P},\boldsymbol{\beta},{\alpha}}\left( \sum_{n \in \mathcal{N}} P_n + \sum_{k \in \mathcal{K}}P_{k}^s 
 \right), \label{36-a}\\
	\mathrm{s.t.}\ & 
 \dfrac{B_n}{\beta_{n} W\theta	 \mathrm{log}_2 \left( 1+ \dfrac{P_n|h^d_n|^2}{\beta_{n} \sigma_0}  \right)} \leq 
T',  \ \forall n \in \mathcal{N}_{1h} \label{36-b}
\\&
 \dfrac{B_n}{\beta_{n} W\theta	 \mathrm{log}_2 \left( 1+ \dfrac{P_n|h^s_{n,D_n}|^2}{\beta_{n} \sigma_0}  \right)}  \leq 
\alpha T' \label{36-c}
 \\ &
   \dfrac{B_{n}}{\beta^s_{n} W\theta	 \mathrm{log}_2 \left( 1+ \dfrac{P_{D_n}^s|h^a_{D_n}|^2}{\beta^s_{n} \sigma_0}  \right)} \label{36-d}
   \\ \nonumber&
 \leq (1-\alpha-\alpha^p)
T',  \ \forall n \in \mathcal{N}_{2h} 
\\&
\sum_{n \in \mathcal{N}} \beta_{n} = 1
\label{36e}
\\&
\sum_{n \in \mathcal{N}_{2h}}(\beta^s_{n} -\beta_{n})\leq  0
\label{36f}
\\&
 0 \leq \alpha  \leq 1
\\&
\eqref{P1-c}-\eqref{P1-d}	
	\end{align} 
\end{subequations}
where $\boldsymbol{\beta} = [\beta_{1},\beta_{2},\cdots,\beta_{N},\beta^s_{1},\cdots,\beta^s_{K}]$.
Due to coupling of the variables and nonconvexity of constraint \eqref{36-b}-\eqref{36-d}, the global optimum solution is not straightforward. Thus, we resort to alternating optimization  for finding the suboptimal solution. 
First, by setting constant values for transmit power we allocate the bandwidth to devices so as  the minimum rate of devices is maximized. Since, by maximizing the minimum rate, the delay constraint for each device is automatically satisfied. 
\begin{subequations} \label{37}
	\allowdisplaybreaks
	\begin{align} &\max_{\boldsymbol{\beta}} \left[\min_n
		\left( r_n^d, \dfrac{1}{2} \min \left\{ r_{n,D_n}^{(1)}, r_n^{(2)} \right\} \right) \right ], \label{P2-a}\\
		\mathrm{s.t.}\ & 
		\eqref{36e},\eqref{36f}.
	\end{align} 
\end{subequations}
The OF is not convex. Therefore, we reformulate the problem as follows:
\begin{subequations} \label{38}
	\allowdisplaybreaks
	\begin{align} &\max_{\boldsymbol{\beta}} \ r_{\min}, \label{P2-a}\\
		\mathrm{s.t.}\ &  
		r_{\min} \leq \beta_{n} 	 \mathrm{log}_2 \left( 1+ \dfrac{P_n|h^d_n|^2}{\beta_{n} \sigma_0}  \right),  \ \forall n \in \mathcal{N}_{1h}
			\\&
			r_{\min} \leq 
   \dfrac{\beta_{n}}{2}  \mathrm{log}_2 \left( 1+ \dfrac{P_n|h^s_{n,D_n}|^2}{\beta_{n} \sigma_0}  \right),  \ \forall n \in \mathcal{N}_{2h}
			\\&
			r_{\min} \leq 
   \dfrac{\beta^s_{n}}{2}
   \mathrm{log}_2 \left( 1+ \dfrac{P_{D_n}^s|h^a_{D_n}|^2}{\beta^s_{n} \sigma_0}  \right),\ \forall n \in \mathcal{N}_{2h}
		\\&
				\eqref{36e},\eqref{36f},
	\end{align} 
\end{subequations}
where $r_{\min}$ is a lower band for the OF in \eqref{37}. 
Due to the fact
that the OF and all constraints are convex, the
problem \eqref{38} is convex and the optimal solution can be
obtained by using the interior point method.
After obtaining the optimum bandwidth allocation we can minimize transmit power. So, let us rewrite \eqref{36}
as:
\begin{subequations} \label{39}
	\allowdisplaybreaks
	\begin{align} &\min_{\mathbf{P}, \alpha}\left( \sum_{n \in \mathcal{N}} P_n + \sum_{k \in \mathcal{K}}P_{k}^s 
  \right), \label{39-a}\\
	\mathrm{s.t.}\ & 
 \dfrac{B_n}{\gamma_{n}^{d'}} \leq \beta_{n} W\theta
T',  \ \forall n \in \mathcal{N}_{1h} \label{39-b}
\\&
 \dfrac{B_n}{\beta_{n} 	  \gamma_{n}^{ (1)'} } \leq \alpha W\theta T', \ \forall n \in \mathcal{N}_{2h} \\&
    \dfrac{ B_{n}}{\beta^s_{n}	  \gamma_{n}^{ (2)'} }
 \leq (1-\alpha-\alpha^p)
W\theta T', \ \forall n \in \mathcal{N}_{2h} 
\\&1 + \rho'_{n} - 2^ {\gamma_n^{ d'}}  \geq 0,\ \forall  n \in \mathcal{N}_{1h},
	\\&
 \rho'_{n} \leq \dfrac{P_n|h^d_n|^2}{\beta_n \sigma_0},\ \forall  n \in \mathcal{N}_{1h},
	\\&
	1 + \psi'_{n} - 2^ {\gamma_{n}^{(1)'} }  \geq 0,\ \forall  n \in \mathcal{N}_{2h}, 
	\\&
	 \psi'_{n} \leq \dfrac{P_{n}
\displaystyle |h^s_{n,D_n} |^2}{\beta_n \sigma_0},\ \forall  n \in \mathcal{N}_{2h} \\&
1 + \zeta'_{n} - 2^ {\gamma_{n}^{(2)'} }  \geq 0,\ \forall  n \in \mathcal{N}_{2h}, 
	\\&
	\zeta'_{n} \leq  \dfrac{P_{D_n}^s|  h^a_{D_n}|^2}{\beta^s_{n}\sigma_0},  \forall n \in \mathcal{N}_{2h},
 \\&
 0 \leq \alpha  \leq 1
\\&
\eqref{P1-c}-\eqref{P1-d}	
	\end{align} 
\end{subequations}
where $\rho'_{n}$, $\psi'_{n}$, and $\zeta'_{n}$, are auxiliary variables.
Thus, the optimization problem \eqref{39} is an SDP. This can be efficiently solved using numerical solvers, such as the SDP tool in CVX \cite{CVX}.

\subsection{Solution in AF Relaying}\label{solution2}
\subsubsection{TDMA Case}
In this subsection, we aim to minimize the total transmit power of all devices and sAPs in the AF Case. The objective function (OF) in \eqref{P1} applies to AF as well. Therefore, 
the optimization problem \eqref{P1} can be formulated for AF relaying by replacing \eqref{P1-b} with
\begin{align}\label{P5-b}
&\sum_{n \in \mathcal{N}_{1h}} \dfrac{B_n}{ \beta_n \log_2 \left( 1+ \hat{g}_{n}^d \right)} \\ \nonumber 
		& + \sum_{n \in \mathcal{N}_{2h}} \dfrac{2B_n}{\beta_n \log_2 \left(1+ \hat{g}_n^d+  \hat{g}_{n,D_n}^{\mathrm{AF}}\right) } \leq  T' W\theta.  
\end{align}
where $\beta_n$ is a constant in the TDMA scheme. 
By assuming constant power, we reformulate the objective to identify the link that maximizes transmission rate, thereby reducing total delay. Assuming constant and equal power for $P_n$ and $P_{D_n}^s$, we compute $r_n^{AF}$ for all potential relays, selecting the link with the highest rate, i.e., $\forall n \in \mathcal{N}, \forall k \in \mathcal{K} $ we have:
\begin{align}
&r^{\mathrm{AF}}_{n,k} \triangleq \dfrac{1}{2} \mathrm{log}_2 \left(1+ \hat{g}_{n}^d + \hat{g}_{n,k}^{\mathrm{AF}}\right),  \\
&r^{\mathrm{AF}}_{n,D_n} \triangleq \max_k [r^{\mathrm{AF}}_{n,1},r^{\mathrm{AF}}_{n,2},\cdots,r^{\mathrm{AF}}_{n,K}]
\label{33}
\end{align}
This metric is subsequently compared with the direct link rate $r_n^d$ to ascertain whether direct or cooperative transmission is more advantageous.
Should two-hop transmission prove preferable, the sAP exhibiting the maximum channel gain will be chosen for cooperative transmission. The comprehensive procedure is delineated in Algorithm \ref{Alg2}.
\begin{algorithm}
	\caption{Algorithm for classification of devices and relay selection in AF case}
	\label{Alg2}
	\begin{algorithmic}[1]
		\State \textbf{Input:}
		The channel gains $h_k^a$, $h^s_{n,k}$, and $h^d_n$    for all $N$ devices and $K$ sAPs.
		\For {$n= 1: N$} 
		\State calculate $r^{\mathrm{AF}}_{n,D_n}$ from  \eqref{33}
            \If{$r_n^d \geq r^{\mathrm{AF}}_{n,D_n}$}
                \State $n \rightarrow \mathcal{N}_{1h}$ 
             \Else
                 \State $n \rightarrow \mathcal{N}_{2h}$
                 \State selected relay: ${D}_{n}$
            \EndIf
		\EndFor
		\State Output: $\mathcal{N}_{1h}$, $\mathcal{N}_{2h}$, $\mathcal{D}$.
	\end{algorithmic}
\end{algorithm}

Given the sets $\mathcal{N}_{1h}$, $\mathcal{N}_{2h}$, and $\mathcal{D}$, we proceed to minimize transmit power in \eqref{P1}. However, \eqref{P1} encounters nonconvexity due to constraint \eqref{P5-b}. To address this, we resort to the SPCA technique to approximate it with convex counterparts. Thus, let us rewrite \eqref{P1} as
\begin{subequations}\label{P6}
	\allowdisplaybreaks
	\begin{align} &\min_{\mathbf{P}}\left( \sum_{n \in \mathcal{N}} P_n + \sum_{n \in \mathcal{N}_{2h}}P_{D_n}^s \right), \label{P6-a}\\
		\mathrm{s.t.}\ & 
		\sum_{n \in \mathcal{N}_{1h}} \dfrac{B_n}{\gamma_{n}^d}  
		+ \sum_{n \in \mathcal{N}_{2h}} \dfrac{2B_n}{\gamma_{n}^{\mathrm{AF}}} \leq   T' W\theta,\label{P6-b}
		\\
		& \beta_n \mathrm{log}_2 \left( 1+ g_{n}^d  \right) \geq \gamma_{n}^d, \ \ \forall  n \in \mathcal{N}_{1h}\label{P6-c}\\
		& \beta_n \mathrm{log}_2 \left( 1+ \lambda_{n}  \right) \geq \gamma_{n}^{\mathrm{AF}}, \  \forall  n \in \mathcal{N}_{2h}, 
		\label{P6-d}\\
		& \lambda_{n}  \leq  g_{n,D_n}^{\mathrm{AF}} + g_n^d,  \  \forall  n \in \mathcal{N}_{2h},  \label{P6-e}
		\\ & \eqref{P1-c}-\eqref{P1-d}	
	\end{align}
\end{subequations}
where $\gamma_n^d$, $\gamma_{n}^{\mathrm{AF}}$, and $\lambda_{n}$ are auxiliary variables introduced to approximate the nonconvex terms with convex counterparts. These variables $\gamma_n^d$ and $\gamma_{n}^{\mathrm{AF}}$ play the roles of lower bounds for $\mathrm{log}_2 \left( 1+ g_{n}^d  \right)$ and $\mathrm{log}_2 \left( 1+ \lambda_{n}  \right)$, respectively. By increasing the lower-bound values and simultaneously reducing the upper-bounds, the left-hand side of the constraints is enhanced, which is crucial to ensure that constraints \eqref{P6-b}-\eqref{P6-e} are active at the optimum. The convexity of \eqref{P6-b} can be established, as it constitutes a linear combination of two quadratic terms over linear functions, rendering it convex \cite{CVX}.
Affine approximations of constraints \eqref{P6-c}-\eqref{P6-e}, $\forall n \in \mathcal{N}$, $\forall k \in \mathcal{K}$, are given by:
Affine approximations of constraints \eqref{P6-c}-\eqref{P6-e}, $\forall n \in \mathcal{N}$ are given by:
\begin{subequations}\label{lem2-formul}
	\allowdisplaybreaks
         \begin{align}
	&1 + \rho_{n} - 2^{\gamma_n^d/\beta_n} \geq 0,\ \forall n \in \mathcal{N}_{1h},\label{41a}
	\\&
 \rho_{n} \leq \dfrac{P_n|h^d_n|^2}{\beta_n\sigma_0},\ \forall n \in \mathcal{N}_{1h},
\label{41b}	\\&
	1 + \lambda_{n} - 2^{\gamma_{n}^{\mathrm{AF}}/\beta_n} \geq 0,\ \forall n \in \mathcal{N}_{2h}, \label{41c}
\\&
 \lambda_{n} \leq \dfrac{P_{n}P_{D_n} |h^a_{D_n}|^2 |h^s_{n,D_n}|^2}{(P_{n}|h^s_{n,D_n}|^2+P_{D_n}|  h^a_{D_n}|^2 +\beta_{n}\sigma_0)\beta_{n}\sigma_0}\nonumber\\&
+\dfrac{P_{n}|h^d_n|^2}{\beta_{n} \sigma_0},\ \forall n \in \mathcal{N}_{2h}, \label{35d} 
	\end{align} 
\end{subequations}
 where $\rho_{n}$  is auxiliary variable.
 \eqref{35d} is still nonconvex.
 Equation \eqref{Eq2} at the top of this page is the affine approximation of \eqref{35d}, where we have defined
 \begin{figure*}
	\begin{align}\label{Eq2}
	 & \sigma_0 \beta_{n}(\bar{\Theta}^{[i]}(\lambda_{n},P_{n})|h^s_{n,D_n}|^2+\bar{\Theta}^{[i]}(\lambda_{n},P_{D_n}) |h^a_{D_n}|^2 
      +\lambda_{n}\beta_{n} \sigma_0 )\leq
        \Theta^{[i]}(P_{n},P_{D_n})[|h^a_{D_n}|^2 (|h^d_n|^2 +|h^s_{n,D_n}|^2 ) ]
        \nonumber\\& + \Theta^{[i]}(P_{n},P_{n})
        |h^d_n|^2 |h^s_{n,D_n}|^2+P_{n}|h^d_n|^2 \beta_{n}\sigma_0,
        \ \forall n \in \mathcal{N}_{2h}
	\end{align}
	\hrulefill
\end{figure*}
\begin{align}
        &\Theta^{[i]}(x,y) \triangleq \\\nonumber
        & \frac{1}{2} (x^{[i]}+y^{[i]})(x+y)-\frac{1}{4}(x^{[i]}+y^{[i]})^2-\frac{1}{4}(x-y)^2,
    \end{align} 
    and
    \begin{align}
        &\bar{\Theta}^{[i]}(x,y) \triangleq \\\nonumber
        & \frac{1}{4} (x+y)^2 + \frac{1}{4} (x^{[i]}-y^{[i]})^2 - \frac{1}{2}(x^{[i]}-y^{[i]})(x-y)
    \end{align} 
Thus, by replacing constraints \eqref{P6-c}--\eqref{P6-e} with \eqref{lem2-formul} and \eqref{Eq2}, the optimization problem \eqref{P6} transforms into a standard convex semidefinite programming (SDP). This can be efficiently solved using numerical solvers, such as the SDP tool in CVX \cite{CVX}.

\subsubsection{FDMA Case}
By assuming constant power and uniform bandwidth allocation, we reformulate the objective to identify the link that maximizes transmission rate, thereby reducing total delay. It is equivalent to compute the maximum channel gain for all sAPs and pAP, and selecting the link with the highest gain, based on \eqref{33}. Thus, Algorithm \eqref{Alg2} is used to select the strongest link  for transmission.

Given the sets $\mathcal{N}_{1h}$, $\mathcal{N}_{2h}$, and $\mathcal{D}$, we proceed to minimize transmit power. This is achieved by simplifying the optimization problem outlined in \eqref{P1} as follows:
\begin{subequations} \label{45}
	\allowdisplaybreaks
	\begin{align} &\min_{\mathbf{P},\boldsymbol{\beta}}\left( \sum_{n \in \mathcal{N}} P_n + \sum_{k \in \mathcal{K}}P_{k}^s 
 \right), \\
	\mathrm{s.t.}\ & 
 \dfrac{B_n}{\beta_{n} W\theta	 \mathrm{log}_2 \left( 1+ \hat{g}_n^d  \right)} \leq 
T',  \ \forall n \in \mathcal{N}_{1h} \label{45b}
\\&
   \dfrac{2B_n}{\beta_{n} W\theta \log_2 \left(1+ \hat{g}_n^d+  \hat{g}_{n,D_n}^{\mathrm{AF}}\right) } \leq 
T',  \ \forall n \in \mathcal{N}_{2h} 
\\&
\sum_{n \in \mathcal{N}} \beta_{n} = 1
\label{45e}
\\&
\eqref{P1-c}-\eqref{P1-d}	
	\end{align} 
\end{subequations}

Due to coupling of the variables and nonconvexity of constraints, the global optimum solution is not straightforward. Thus, we resort to alternative optimization  for finding the suboptimal solution. 
First, by setting constant values for transmit power we allocate the bandwidth to devices so as  the minimum rate of devices is maximized. Since, by maximizing the minimum rate, the delay constraint for each device is automatically satisfied. 
\begin{subequations} \label{46}
	\allowdisplaybreaks
	\begin{align} &\max_{\boldsymbol{\beta}} \left[\min_n
		\left( r_n^d, r_n^{\mathrm{AF}} \right) \right ], \label{P2-a}\\
		\mathrm{s.t.}\ & 
		\eqref{36e}
	\end{align} 
\end{subequations}
The OF is not convex. Therefore, we reformulate the problem as follows:
\begin{subequations} \label{47}
	\allowdisplaybreaks
	\begin{align} &\max_{\boldsymbol{\beta}} \ r_{\min}, \label{P2-a}\\
		\mathrm{s.t.}\ &  
		r_{\min} \leq \beta_{n} 	 \mathrm{log}_2 \left( 1+ \dfrac{P_n|h^d_n|^2}{\beta_{n} \sigma_0}  \right),  \ \forall n \in \mathcal{N}_{1h}
			\\&
			r_{\min} \leq \dfrac{ \beta_n}{2} \mathrm{log}_2 \left(1+ g_{n}^d + g_{n,D_n}^{\mathrm{AF}}\right),  \ \forall n \in \mathcal{N}_{2h}			
		\\&
				\eqref{36e},
	\end{align} 
\end{subequations}
where $r_{\min}$ is a lower band for the OF in \eqref{46}. 
Due to the fact
that the OF and all constraints are convex, the
problem \eqref{47} is convex and the optimal solution can be
obtained by using the interior point method.
After obtaining the optimum bandwidth allocation we can minimize transmit power.
So, let us rewrite \eqref{45}
as:
\begin{subequations} \label{48}
	\allowdisplaybreaks
	\begin{align} &\min_{\mathbf{P}}\left( \sum_{n \in \mathcal{N}} P_n + \sum_{k \in \mathcal{K}}P_{k}^s 
  \right), \label{48-a}\\
	\mathrm{s.t.}\ & 
 \dfrac{B_n}{\gamma_{n}^{d}} \leq  W\theta
T',  \ \forall n \in \mathcal{N}_{1h} \label{48-b}
\\&
 \dfrac{2B_n}{\gamma_{n}^{\mathrm{AF}}} \leq  W\theta T', \ \forall n \in \mathcal{N}_{2h}
\\&
\eqref{P1-c},\eqref{P1-d},\eqref{41a}-\eqref{41c},\eqref{Eq2}.	
	\end{align} 
\end{subequations}
Thus, the optimization problem \eqref{48} is an SDP. This can be efficiently solved using numerical solvers, such as the SDP tool in CVX \cite{CVX}.

\section{Proposed Method for Transmit Power Optimization with  Multiple RISs}\label{RIS-opt}

In this subsection, we aim to minimize the total transmit power of all devices in the RIS-aided case. 
The optimization problem can be posed as
\begin{subequations} \label{P-RIS1}
	\allowdisplaybreaks
	\begin{align} 
	&\min_{\mathbf{P},\boldsymbol{\phi}} \displaystyle \sum_{n \in \mathcal{N}} P_n\label{P-RIS1-a}\\
	\mathrm{s.t.}\ & 
\sum_{n \in \mathcal{N}} \dfrac{B_n}{W \theta\mathrm{log}_2 \left( 1+ \dfrac{P_n \left \vert \Tilde{h}_n \right\vert^2 }{\sigma_0} \right)} \leq  T', \label{P-RIS1-b}
\\&
\phi_{k,j} \in [0, 2\pi),   \ \forall k \in \mathcal{K},\forall j \in \mathcal{J}_k, \label{P-RIS1-c}\\&
\eqref{P1-c}
       \end{align} 
\end{subequations}
where $\Tilde{h}_n = \left((h^{d}_{n})^* + \sum_{k=1}^{K} \mathbf{h}^H_{n,k} \boldsymbol{\Phi}_k \mathbf{h}^r_k  
\right) $. 
Due to coupling between variables $P_n$ and $\boldsymbol{\phi}$ we resort to a two stage optimization method. First, for constant power assumption we optimize the RIS reflection coefficients then we minimize the power based on the obtained beamforming values. 
\subsection{First Stage}
We first optimize the phase shift vector $\phi$ of problem \eqref{P-RIS1}. Before optimizing $\phi$, we rewrite the combined channel as $\mathbf{h}^H_{n,k} \boldsymbol{\Phi}_k \mathbf{h}^r_k = \boldsymbol{\phi}_k^T \mathbf{u}_{n,k}$.
According to problem \eqref{P-RIS1}, the optimal $\phi$ can be calculated by maximizing the rate, which in turn maximizes the SNR. Under the assumption of constant power, this results in the maximization of the combined channel gain. Therefore, we have:
 \begin{subequations}\label{P4}
	\allowdisplaybreaks
	\begin{align} 
	&\max_{\boldsymbol{\phi}} \displaystyle \left\vert (h^{d}_{n})^* + \sum_{k=1}^{K} \boldsymbol{\phi}_k^T \mathbf{u}_{n,k}
\right\vert^2  \label{P4-a}\\
	\mathrm{s.t.}\ & 
\phi_{k,j} \in [0, 2\pi),   \ \forall k \in \mathcal{K},\forall j \in \mathcal{J}_k, 
       \end{align}
\end{subequations}
Let $\boldsymbol{\phi}_k^*$ be the conjugate vector of $\boldsymbol{\phi}_k$. The total number of
elements for all RISs is denoted by $Q = \sum_{k=1}^{K} J_k$. Denote $\boldsymbol{v} = [\boldsymbol{\phi}_1^*;\cdots;\boldsymbol{\phi}_K^*] \in \mathbb{C}^Q$
 and $\mathbf{u}_{n} = [\mathbf{u}_{n,1};\cdots ; \mathbf{u}_{n,K}] \in
\mathbb{C}^{Q}$. Problem \eqref{P4} can be rewritten as:
\begin{subequations}\label{P8}
	\allowdisplaybreaks
	\begin{align} 
	&\max_{\boldsymbol{v}} \displaystyle \left\vert h^{d}_{n} + \mathbf{u}_{n}^H \boldsymbol{v}
\right\vert^2  \label{P8-a}\\
	\mathrm{s.t.}\ & 
\vert v_{q} \vert =1,   \ \forall q \in \mathcal{Q}, \label{P8-b}
       \end{align}
\end{subequations}
where $\mathcal{Q} = \{1,\dots,Q\}$.
To solve the optimization problem in \eqref{P8}, various methods
were proposed by techniques such as semidefinite relaxation
(SDR) technique \cite{Wu} and successive refinement (SR)
algorithm \cite{Wu2}. However, the SDR method imposes high
complexity to obtain a rank-one solution and the SR algorithm
requires a large number of iterations due to the need for
updating the phase shifts in a one-by-one manner. Now we propose two approaches to solve the problem in \eqref{P8} as follows:

$\bullet$ Approach 1: To handle the nonconvexity of OF, the first-order Taylor series of $\left\vert h^{d}_{n} + \mathbf{u}_{n}^H \boldsymbol{v}
\right\vert^2 $ is used. Then,
we adopt the SPCA method as:
\begin{subequations}\label{P7}
	\allowdisplaybreaks
	\begin{align} 
	&\max_{\boldsymbol{v}} 
 2 \mathcal{R} (( h^{d}_{n} + \mathbf{u}_{n}^H \boldsymbol{v}^{(i-1)})^H \mathbf{u}_{n}^H \boldsymbol{v})+
 \left\vert h^{d}_{n} + \mathbf{u}_{n}^H \boldsymbol{v}^{(i-1)}
\right\vert^2 \nonumber \\&
- 2 \mathcal{R} ((h^{d}_{n} + \mathbf{u}_{n}^H \boldsymbol{v}^{(i-1)})^H \mathbf{u}_{n}^H \boldsymbol{v}^{(i-1)})
 \label{P4-a}\\
	\mathrm{s.t.}\ & 
\vert v_{q} \vert \leq 1,   \ \forall q \in \mathcal{Q}, \label{P7-b}
       \end{align}
\end{subequations}
where the
superscript $(i - 1)$ represents the value of the variable at the
$(i - 1)$-th iteration.
In \cite{RIS-est}, it has been shown that \eqref{P7-b} always holds with
equality for the optimal solution of problem \eqref{P7}. Then, the optimal solution of problem \eqref{P7} is:
\begin{align}
  & \boldsymbol{v} = e^{j\angle(\mathbf{u}_{n}(h^{d}_{n} + \mathbf{u}_{n}^H \boldsymbol{v}^{(i-1)}))}. 
\end{align}
where $\angle(\cdot)$ represents the angle vector of a vector.
The SCA algorithm for solving problem \eqref{P8} is summarized
in Algorithm \ref{Alg3}. The convergence of Algorithm \ref{Alg3} is guaranteed
as shown in Proposition 3 of \cite{Zappone}.
\begin{algorithm}
	\caption{SCA Method for Phase Optimization}
	\label{Alg3}
	\begin{algorithmic}[1]
		\State \textbf{Input:}
		Initialize $\boldsymbol{v}^{(0)}$. Set iteration number $i=1$.
		\State \textbf{repeat} 
		\State   Set $ \boldsymbol{v}^{(i)} = e^{j\angle(\mathbf{u}_{n}(h^{d}_{i} + \mathbf{u}_{n}^H \boldsymbol{v}^{(i-1)}))} $ and $i=i+1$.
        \State
		 \textbf{Until} the objective value \eqref{P7} converges.
		\State Output: $\boldsymbol{\theta} = (\boldsymbol{v}^{(i)})^*$ .
	\end{algorithmic}
\end{algorithm}

$\bullet$ Approach 2:
Instead of using SPCA to solve the non-convex problem in \eqref{P8}, we solve the problem with a closed-form solution by exploiting the special structure of its OF. Specifically, we have the following inequality:
\begin{align}\label{myeq}
    \left\vert h^{d}_{n} + \mathbf{u}_{n}^H \boldsymbol{v}
\right\vert \labelrel\leq{myeq:inequality} \left\vert h^{d}_{n} \right\vert + \left\vert\mathbf{u}_{n}^H \boldsymbol{v}
\right\vert,
\end{align}
where \eqref{myeq:inequality} is due to the triangle inequality and the equality
holds if and only if $\arg (\mathbf{u}_{n}^H \boldsymbol{v})  =  \arg (h^{d}_{n}) \triangleq \phi_0 $
Next,
we show that there always exists a solution $\phi$ that satisfies \eqref{myeq:inequality}
with equality as well as the phase shift constraints in \eqref{P8-b}.
With \eqref{myeq}, problem \eqref{P8} is equivalent to
\begin{subequations}\label{P9}
	\allowdisplaybreaks
	\begin{align} 
	&\max_{\boldsymbol{v}} \displaystyle \left\vert \mathbf{u}_{n}^H \boldsymbol{v}
\right\vert^2  \label{P9-a}\\
	\mathrm{s.t.}\ & 
\vert v_{q} \vert =1,   \ \forall q \in \mathcal{Q}, \label{P9-b}
\\ &
\arg (\mathbf{u}_{n}^H \boldsymbol{v})  = \phi_0.
       \end{align}
\end{subequations}
It is not difficult to show that the optimal solution to
the above problem is given by $\boldsymbol{v} = e^{j(\phi_0 - \arg (\mathbf{u}_{n}^H))}$. Thus, the phase shift of all RIS (angle of $\boldsymbol{v}$) is
given by
\begin{align}\label{phase_cal}
    \arg(\boldsymbol{v}) = \phi_0 - \arg (\mathbf{u}_{n}^H).
\end{align}

\subsection{Second Stage}
After obtaining the optimum phase shift of RISs from stage 1, we can solve the problem \eqref{P-RIS1} and finding the optimum power values for devices.
Using the same approach as the previous sections we can rewrite \eqref{P-RIS1} as:
\begin{subequations} \label{P-RIS2}
	\allowdisplaybreaks
	\begin{align} &\min_{\mathbf{P}} \sum_{n \in \mathcal{N}} P_n , \label{P-RIS2-a}\\
	\mathrm{s.t.}\ & 
\sum_{n \in \mathcal{N}} \dfrac{B_n}{\gamma_n^r} 
\leq T' W\theta
\\
&1 + \rho_{n}^r - 2^ {\gamma_n^r}  \geq 0,\ \forall  n \in \mathcal{N},
	\\&
 \rho^r_{n} \leq \dfrac{P_n|\Tilde{h}_n|^2}{\sigma_0},\ \forall  n \in \mathcal{N},
\\&
\eqref{P1-c}
	\end{align} 
\end{subequations}

It is evident that the optimization problem now takes the form of a standard convex SDP problem, which can be efficiently solved using numerical solvers such as the SDP tool in CVX \cite{CVX}.

\subsection{Convergence and Complexity Analysis} 
In this section, we present a convergence and complexity analysis of the proposed SPCA algorithm. 

\subsubsection{Convergence Analysis}
Given the non-convex nature of the original problem \eqref{P1}, proving convergence to a global minimum is not feasible. However, we can demonstrate convergence to Karush-Kuhn-Tucker (KKT) points under certain regularity conditions. The following lemmas, referenced from \cite{Beck, Hashempour-Bastami}, will be instrumental in the convergence proof. This analysis is applicable to both DF and AF relaying scenarios, so we present it in a general form to avoid repetition. For simplicity, let $\Omega$ represent the feasible set of \eqref{P1}, and let $\Omega^{[i]}$ denote the feasible set at the $i^{th}$ iteration.
\begin{Lemma}
    Let $\mathcal{D} : \mathbb{R}^n \rightarrow \mathbb{R}$ be a strictly convex and differentiable function on a nonempty convex set $S \subseteq \mathbb{R}^n$. Then D is
strongly convex on the set S. 
\end{Lemma}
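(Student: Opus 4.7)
The plan is to establish the quadratic lower bound characterization of strong convexity: produce a constant $\mu > 0$ such that
\[
\mathcal{D}(y) \geq \mathcal{D}(x) + \nabla \mathcal{D}(x)^T (y-x) + \frac{\mu}{2}\|y-x\|^2, \quad \forall x, y \in S.
\]
Since $\mathcal{D}$ is differentiable on the convex set $S$, this inequality is equivalent to the standard monotonicity form $(x-y)^T(\nabla \mathcal{D}(x) - \nabla \mathcal{D}(y)) \geq \mu \|x-y\|^2$ of strong convexity, so the entire task reduces to exhibiting such a $\mu$.

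First I would invoke the first-order characterization of strict convexity: for a differentiable strictly convex function on a convex set, one has
\[
\mathcal{D}(y) > \mathcal{D}(x) + \nabla \mathcal{D}(x)^T (y-x), \quad x, y \in S,\ x \neq y.
\]
I would then normalize this gap and introduce the candidate modulus
\[
\mu \triangleq \inf_{\substack{x,y \in S,\ x \neq y}} \frac{2\bigl[\mathcal{D}(y) - \mathcal{D}(x) - \nabla \mathcal{D}(x)^T(y-x)\bigr]}{\|y-x\|^2},
\]
which is non-negative by convexity and positive whenever the infimum is attained in the interior. The remaining step is to argue that this infimum is strictly bounded away from zero, so that the quadratic lower bound follows on all of $S$.

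The hard part will be precisely this last step, because strict convexity by itself is not enough: the canonical counterexample $f(x)=x^4$ on $[-1,1]$ is differentiable and strictly convex yet has vanishing curvature at the origin, so the ratio above tends to $0$ along $(x,y)\to(0,0)$. Consequently the argument must implicitly invoke further regularity, such as twice continuous differentiability of $\mathcal{D}$ with positive definite Hessian together with compactness of $S$ (or at least of the relevant sublevel sets generated along the SPCA iterates). Under these conditions, continuity of $\lambda_{\min}(\nabla^2 \mathcal{D}(\cdot))$ on the compact set $S$ forces the minimum $\mu = \min_{z \in S}\lambda_{\min}(\nabla^2 \mathcal{D}(z)) > 0$, and a second-order Taylor expansion along the segment from $x$ to $y$, which lies in $S$ by convexity, delivers the desired quadratic lower bound. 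Within the SPCA convergence analysis that follows, this lemma will be used to guarantee that each convex surrogate subproblem has a strongly convex objective, which in turn ensures unique iterates and the monotone descent required to converge to a KKT point of \eqref{P1}.
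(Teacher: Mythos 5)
Your analysis is essentially correct, and it is more honest than what the paper itself provides: the paper's ``proof'' of this lemma is nothing more than a pointer to the reference \cite{Beck}, with no argument given. You are right that the statement, as transcribed here, is false: strict convexity plus differentiability on a convex set does \emph{not} imply strong convexity, and your counterexample $f(x)=x^4$ (on $[-1,1]$ or on all of $\mathbb{R}$) settles this, since the curvature vanishes at the origin and the ratio $2\bigl[f(y)-f(x)-f'(x)(y-x)\bigr]/\|y-x\|^2$ tends to $0$. The lemma in the cited source holds only under additional hypotheses that were lost in transcription --- in Beck et al.\ the relevant class is essentially quadratic (or $C^2$ with uniformly positive-definite Hessian over a compact set), for which strict convexity genuinely coincides with strong convexity. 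Your proposed repair (compactness of $S$, twice continuous differentiability, positive-definite Hessian, then $\mu=\min_{z\in S}\lambda_{\min}(\nabla^2\mathcal{D}(z))>0$ followed by a second-order Taylor expansion along the segment joining $x$ and $y$) is the standard and correct route, and it is also the version that the downstream SPCA convergence argument actually needs, since the convex surrogates constructed in \eqref{P3} and \eqref{P6} are smooth on the compact feasible set cut out by the power caps \eqref{P1-c}--\eqref{P1-d}. In short: there is no gap in your reasoning, but there is one in the lemma as stated, and your write-up correctly locates it.
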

\begin{proof}
	See \cite{Beck}.
\end{proof}
\begin{Lemma}
    
Let $\{\mathbf{x}^{[i]}\}$ be the sequence generated by the SPCA method.
Then, for every $i \geq 0$: $\mathbf{i})$ $\Omega^{[i]}
\subseteq \Omega$, $\mathbf{ii})$ $\mathbf{x}^{[i]} \in \Omega^{[i]} \cap \Omega^{[i+1]}$, 
$\mathbf{iii})$ $\{\mathbf{x}^{[i]}\}$ is a feasible point of \eqref{P1}, $\mathbf{iv})$, $\sum_n \mathbf{P}_{n}^{[i+1]} \leq \sum_n\mathbf{P}_{n}^{[i]}$. 
\end{Lemma}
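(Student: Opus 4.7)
The plan is to establish the four assertions in sequence, since (iii) and (iv) follow quickly once (i) and (ii) are in place. The unifying principle behind the proof is that the SPCA surrogates are \emph{inner approximations} of the original feasible set: at every iteration, the nonconvex constraints of \eqref{P1} are replaced by affine/convex lower bounds on the rate terms (via the auxiliary variables $\gamma_n^d$, $\gamma_{n,D_n}^{(1)}$, $\gamma_n^{(2)}$ in the DF case, and their AF analogs $\gamma_n^{\mathrm{AF}}$, $\lambda_n$), and these lower bounds are tight at the current linearization point.

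For assertion (i), I would take an arbitrary $\mathbf{x}\in\Omega^{[i]}$ and verify that the surrogate constraints imply the original delay constraint \eqref{P1-b}. In the DF case, combining \eqref{lem1-formul} gives $2^{\gamma_n^d}\le 1+\rho_n \le 1+\hat{g}_n^d$, hence $\gamma_n^d \le \log_2(1+\hat{g}_n^d)$, and similarly for $\gamma_{n,D_n}^{(1)}$ and $\gamma_n^{(2)}$. Therefore each term $B_n/\gamma_n^{(\cdot)}$ upper-bounds the corresponding $B_n/\log_2(\cdot)$ term appearing in \eqref{P2-b}, so the surrogate constraint \eqref{P3-b} is at least as tight as \eqref{P1-b} and $\mathbf{x}\in\Omega$. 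In the AF case the same chain works for $\gamma_n^{\mathrm{AF}}$ via \eqref{41a}--\eqref{41c}; the delicate point is the bilinear approximation \eqref{Eq2}, where one has to check that the identity $xy=\tfrac14[(x+y)^2-(x-y)^2]$ together with convexity of the squared terms yields $\Theta^{[i]}(x,y)\le xy \le \bar{\Theta}^{[i]}(x,y)$ by the first-order Taylor inequality, so that the substitutions preserve feasibility of \eqref{35d} in the correct direction.

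Assertion (ii) then follows from the tangency property of the linearization: at $\mathbf{x}=\mathbf{x}^{[i]}$ the affine surrogates in \eqref{lem1-formul} and \eqref{Eq2} match the original nonconvex functions exactly, so whenever $\mathbf{x}^{[i]}$ satisfies the original (log or bilinear) inequalities, it also satisfies their linearized versions at the expansion point $\mathbf{x}^{[i]}$, giving $\mathbf{x}^{[i]}\in\Omega^{[i+1]}$. The inclusion $\mathbf{x}^{[i]}\in\Omega^{[i]}$ is automatic because $\mathbf{x}^{[i]}$ is a solution of the $i$-th convex subproblem. Assertion (iii) is then immediate: $\mathbf{x}^{[i]}\in\Omega^{[i]}\subseteq\Omega$ by (i).

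For assertion (iv), the argument is a standard feasible-descent step: by (ii) the previous iterate $\mathbf{x}^{[i]}$ is feasible for the $(i+1)$-th convex subproblem, and since $\mathbf{x}^{[i+1]}$ is its minimizer and the objective $\sum_n P_n+\sum_n P_{D_n}^s$ is identical across iterations, we must have $\sum_n P_n^{[i+1]}\le \sum_n P_n^{[i]}$. The main obstacle, and the step that demands the most bookkeeping, is the inner-approximation verification in (i) for the AF constraint \eqref{35d}: one has to carefully track which of $\Theta^{[i]}$ and $\bar{\Theta}^{[i]}$ plays the role of lower vs.\ upper bound on each bilinear product appearing on each side of the inequality, so that the net effect is a convex tightening of \eqref{35d}. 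Once this is settled, (ii)--(iv) follow by routine tangency and optimality arguments that are standard for SPCA-type methods.
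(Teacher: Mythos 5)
Your proposal is correct, but note that the paper itself does not actually prove this lemma: its ``proof'' is the single line ``See \cite{Beck}'', deferring entirely to the cited SPCA reference. What you have written is the argument that this citation stands in for, instantiated on the paper's own constraints, and the instantiation is right. In particular, the only genuinely paper-specific step --- the sandwich $\Theta^{[i]}(x,y)\le xy\le\bar{\Theta}^{[i]}(x,y)$ obtained from $xy=\tfrac14\left[(x+y)^2-(x-y)^2\right]$ by linearizing whichever convex square appears with the unfavourable sign --- is exactly what makes \eqref{Eq2} a convex, tangent, \emph{inner} approximation of \eqref{35d}, and you correctly assign the upper bound $\bar{\Theta}^{[i]}$ to the products $\lambda_n P_n$, $\lambda_n P_{D_n}$ on the left-hand side and the lower bound $\Theta^{[i]}$ to $P_nP_{D_n}$, $P_n^2$ on the right. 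Two details you should make explicit in a full write-up: first, the chain $B_n/\gamma_n^{(\cdot)}\ge B_n/\log_2(1+\hat{g}^{(\cdot)})$ used for assertion (i) requires $\gamma_n^{(\cdot)}>0$, which is implicit in the delay constraint being finite; second, membership $\mathbf{x}^{[i]}\in\Omega^{[i+1]}$ in assertion (ii) requires exhibiting feasible values of the auxiliary variables $(\rho_n,\psi_n,\zeta_n,\gamma_n^{(\cdot)},\lambda_n)$, which one does by setting them to the values at which the surrogate inequalities are tight at $\mathbf{x}^{[i]}$. It is also worth observing that in the DF case the reformulation \eqref{lem1-formul} is exact and iteration-independent, so there $\Omega^{[i]}$ is a convex reformulation of $\Omega$ rather than a strict inner approximation; the iteration-dependent content of assertions (i), (ii) and (iv) lives entirely in the AF constraint \eqref{Eq2}. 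Your descent argument for (iv) is the standard one and is valid as stated.
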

\begin{proof}
	See \cite{Beck}.
\end{proof}
\begin{Lemma}
The sequence $\sum_n \mathbf{P}_{n}^{[i]}$ converges.
\end{Lemma}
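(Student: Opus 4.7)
The plan is to invoke the Monotone Convergence Theorem: I will show that the sequence $\{\sum_n \mathbf{P}_n^{[i]}\}$ is monotonically non-increasing and bounded below, which together guarantee convergence to a finite limit.

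First, I would use Lemma 2 (part $\mathbf{iv}$), which has already established that $\sum_n \mathbf{P}_n^{[i+1]} \leq \sum_n \mathbf{P}_n^{[i]}$ for every iteration index $i \geq 0$. This directly gives monotonicity of the sequence of objective values generated by the SPCA iterates, so no additional work is needed here beyond citing this fact.

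Next, I would establish a lower bound. Since each $P_n$ and each $P_{D_n}^s$ represents a physical transmit power, the SPCA constraints (and the feasibility guaranteed by Lemma 2 part $\mathbf{iii}$) enforce $P_n \geq 0$ and $P_{D_n}^s \geq 0$ at every iteration. Therefore
\begin{equation}
    \sum_{n \in \mathcal{N}} P_n^{[i]} + \sum_{n \in \mathcal{N}_{2h}} (P_{D_n}^s)^{[i]} \;\geq\; 0, \qquad \forall i \geq 0,
\end{equation}
so the objective sequence is bounded below by zero. One could even sharpen this to the optimal value of \eqref{P1}, but zero suffices.

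Finally, I would invoke the standard Monotone Convergence Theorem for real sequences: any non-increasing sequence of real numbers that is bounded below converges to its infimum. Combining the monotonicity from Lemma~2($\mathbf{iv}$) with the lower bound above yields the existence of a finite limit $L^\star \geq 0$ such that $\sum_n \mathbf{P}_n^{[i]} \to L^\star$ as $i \to \infty$, completing the proof. I do not anticipate any real obstacle here; the proof is essentially a one-line application of monotone convergence, and the only thing worth being careful about is making explicit the non-negativity of the power variables so that the lower bound is uncontroversial.
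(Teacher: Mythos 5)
Your argument is correct and matches the one the paper relies on: the paper's proof simply cites \cite{Beck,Hashempour-Bastami}, and the surrounding text makes clear the intended reasoning is exactly monotonicity of the objective (Lemma~2(\textbf{iv})) plus boundedness, followed by the Monotone Convergence Theorem. Your only addition is making the lower bound (non-negativity of the transmit powers) explicit, which is a reasonable clarification rather than a different route.
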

\begin{proof}
	See \cite{Beck,Hashempour-Bastami}.
\end{proof}
Based on the aforementioned lemmas and adopting the same approach as in \cite{Hashempour-Bastami}, the convergence of our SPCA method for solving the DF relaying problems discussed in Section \ref{solution1} is straightforward.
For the AF case, since \eqref{35d} is relaxed using the first-order approximation in \eqref{Eq2}, the solution to problem \eqref{P6} at iteration $[i]$ remains feasible for iteration $[i + 1]$. Consequently, the objective function is non-increasing with each iteration. Given that the problem is upper-bounded by the transmit power constraints \eqref{P1-c}-\eqref{P1-d}, the proposed algorithm is guaranteed to converge.

\subsubsection{Complexity Analysis}
Algorithms \eqref{Alg1} and \eqref{Alg2} compute the maximum function over $K$ relays for each of the $N$ devices. Therefore, the computational complexity of  each of these algorithms is $\mathcal{O}\left( N K  \right)$.
 The worst-case complexity order of solving the convex problem \eqref{P3},\eqref{37}, \eqref{39}, \eqref{P6} and \eqref{P-RIS2}, by using SPCA method is given
by $ \mathcal{O}\left(l_{\max} N^{1.5} (M_1 N_{1h} + M_2 N_{2h}) \log_2\left(\frac{1}{\epsilon_1}\right)\right)
$ computations \cite{Beck,Hashempour-Bastami}, where $M_1$ and $M_2$ are the constraints applied to $N_{1h}$ and $N_{2h}$ devices, respectively.
$l_{\max}$ is the number of iterations for interior-point method and $\epsilon_1$ is the solution accuracy. Finally, the  overall complexity of phase shift calculation of all RISs from \eqref{phase_cal} is $\mathcal{O}\left(NQ\right)$.
Thus, the overall worst-case complexity of the proposed method is $\mathcal{O}\left(N^{3.5}\right)$, primarily due to the SPCA approach. However, this complexity is manageable and does not compromise the feasibility of our approach, as modern processors can efficiently handle these computations within the pAP, which serves as the main scheduler of the subnetwork.

\section{Simulation Results}\label{Simulat}
In this section, our proposed solution is evaluated via Monte Carlo simulations. 
The main simulation parameters are shown in Table \ref{Table 1} (except in cases where stated otherwise). 
\begin{table}
	\small
	\renewcommand{\arraystretch}{1.3}
	\caption{Simulation parameters setup}
	\centering
	\label{Table 1}
	\resizebox{\columnwidth}{!}{
		\begin{tabular}{|c|p{60mm}|}
			\hline
			Parameter description  &  Value \\
			\hline \hline
			Subnetwork area  &  $3\times 3 \ m^2$ \\		\hline
			Number of devices, $N$ &  $10 \sim 20$  \\		\hline
			Number of pAPs, &  1  \\ 	\hline
			Number of sAPs, $K$ &  Variant based on the scheme  \\ 	\hline
			Data per device & $32 \sim 256$ Bytes  \\ 		\hline
			Cycle duration, $T$ & 0.1 ms \\ 		\hline
	    	Bandwidth & 100 MHz \\ 		\hline
			Carrier frequency &     10 GHz	 \\ 		\hline
			$P_{\max}$ &  $10 \sim 30$ dBm \\ 		\hline
			Power spectral density of the AWGN &  -174 dBm/Hz  \\ 		\hline
      Shadowing standard deviation &  7 dB  \\ 		\hline
		\end{tabular}}
	\end{table}
We consider a $3 \times 3$ $\rm m^2$ subnetwork, serving $10 \sim 20$ devices. This can be the case of a production module in a factory. Devices and APs are uniformly distributed in the subnetwork area. The pAP is expected to receive a $32 \sim 256$~ bytes packet from each device in a total time of $0.1\rm ms$ and bandwidth of $100\rm MHz$. Moreover, the power spectral density of the additive white Gaussian noise is -174 dBm/Hz.
In our simulations, we adopt Rician fading with a K-factor of 7 for all device-to-AP and AP-to-AP links, reflecting the high likelihood of LoS conditions in small subnetworks, as outlined in \cite{3GPP}. The path loss model is based on the 3GPP channel model for indoor factory scenarios \cite{3GPP}. Additionally, in line with \cite{3GPP}, we incorporate a shadow fading model with a standard deviation of 7 dB.

\subsection{Relay Performance}
We consider cooperative schemes with different number of sAPs to be possibly selected, including `1h' and `1 of $K$'. 
First, we evaluate the TDMA scenario using the DF method. The empirical cumulative distribution function (CDF) of transmission power is obtained through 500 iterations of Monte Carlo simulations, where devices and APs are randomly deployed in the subnetwork for each iteration. As shown in Fig.\ref{fig5}, we compare the empirical CDF of transmission power for devices and APs for different number of APs, using a packet size of $B_n=32$ bytes and a maximum transmission power of $P_{\max}=10$ dBm.  Results show a reduction of the transmit power ranging from 2 dB to 4.5 dB for the two-hop schemes with respect to the single-hop scheme. 
This reduction can be attributed to the possibility of selecting relays in advantageous propagation conditions with respect to the link with the pAP.
Clearly, by increasing the number of sAPs, greater power savings can be achieved. However, it is evident that the difference in this power gain diminishes as the value of $K$ becomes large.
\begin{figure} 
	\centering
	\includegraphics[width=0.92\linewidth]{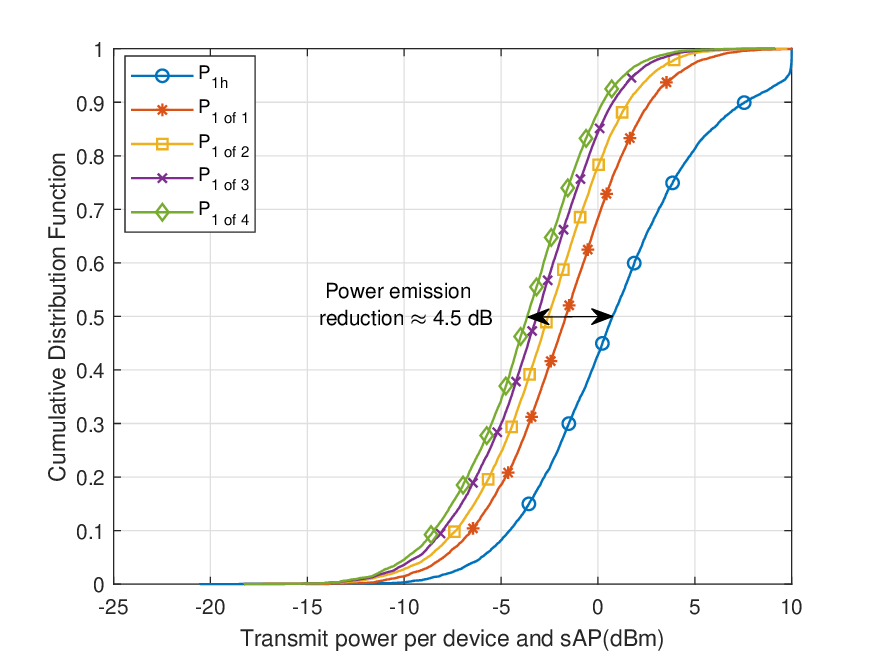}
	\caption{Comparison of the CDF of transmit power across varying numbers of sAPs for TDMA with the DF method, given $B_n=32$ and $P_{\max}=10 $ dBm}
	\label{fig5}		
\end{figure}
In Algorithm \ref{Alg1}, we proposed an efficient method for device classification and relay selection. To evaluate the performance of this method, we compare the transmit power CDFs in Fig.~\ref{fig6} under the following scenarios: 1) All devices employ cooperative transmission, 2) All devices transmit directly to the pAP, 3) Devices are classified for single-hop or two-hop operations according to the classification method in Algorithm~\ref{Alg1}, and  4) Random selection of devices, categorized as single-hop or two-hop devices.

In this simulation, we set $P_{\text{max}} = 10$ dBm, with only one sAP available (i.e., `1 of 1'). The results show that all classification methods require more power compared to Algorithm~\ref{Alg1}, demonstrating its effectiveness in reducing power consumption. Specifically, as shown in Fig.\ref{fig6}, Algorithm \ref{Alg1} outperforms random selection in power emission by up to 3.7 dB.
Since Algorithm~\ref{Alg2} follows a similar procedure, we omit its simulation for the sake of brevity.
\begin{figure} 
	\centering
	\includegraphics[width=0.92\linewidth]{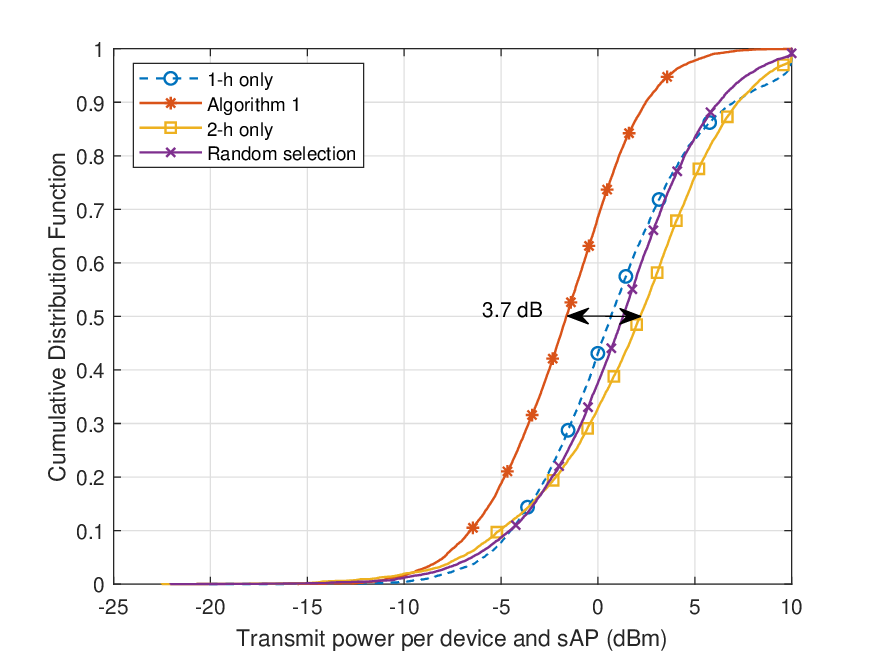}
	\caption{Comparing transmit power CDFs for various device classification methods}
	\label{fig6}		
\end{figure}

In Fig.~\ref{fig7}, we compare the AF and DF relaying methods for both TDMA and FDMA transmissions. Figs. \ref{7a} and \ref{7b} illustrate the results for $B_n = 64$ bytes and $B_n = 128$ bytes, respectively. It is observed that increasing the packet size requires more transmission power to meet the time constraints across all methods. Specifically, the probability of time overflow increases in the TDMA case with $B_n = 128$ bytes. In contrast, the FDMA method outperforms TDMA in terms of power efficiency, as reducing the bandwidth decreases the noise power for each device, thereby reducing the required transmit power. Finally, a comparison between AF and DF shows the superior performance of DF, attributed to its greater flexibility in signal decoding. Notably, for $B_n = 64$ bytes, the DF method achieves power savings of 1.8 dB and 0.8 dB in the TDMA and FDMA cases, respectively, compared to AF relaying.
\begin{figure} 
	\centering
	\subfloat[\label{7a}]{%
		\includegraphics[width=0.92\linewidth]{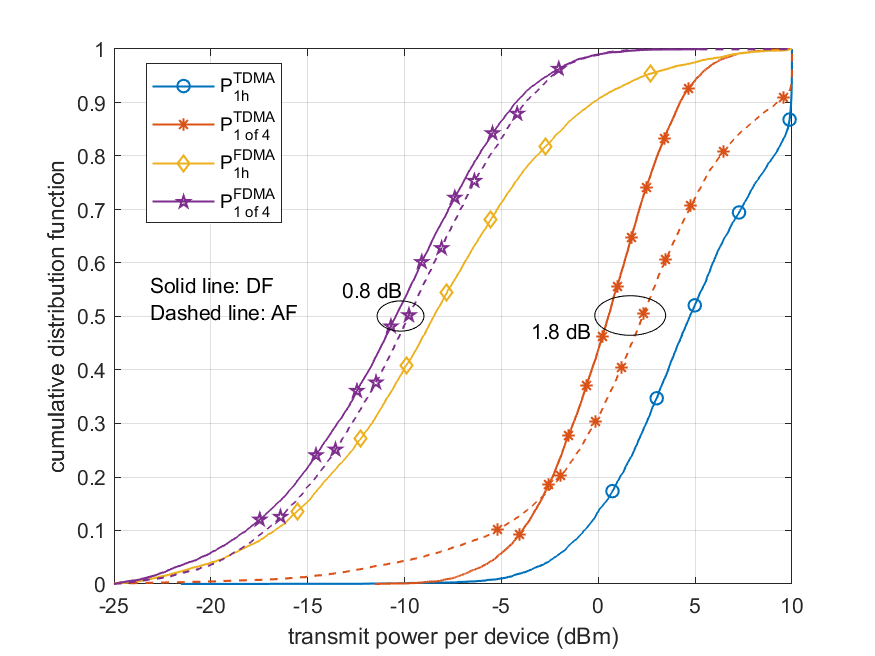}}
	\hfill
	\subfloat[\label{7b}]{%
		\includegraphics[width=0.92\linewidth]{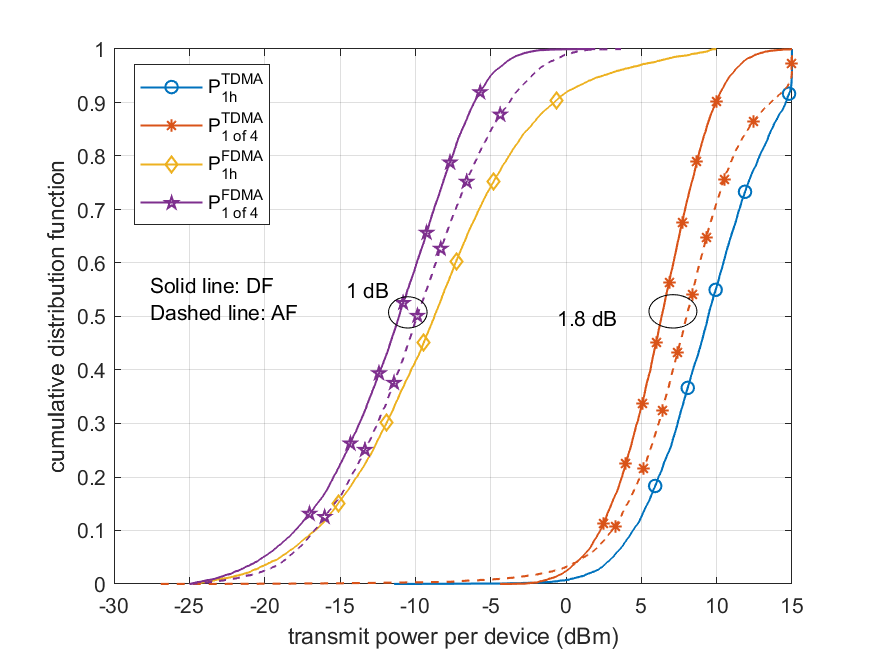}}
	\caption{Comparison of transmit power CDFs for DF and AF methods using TDMA or FDMA transmission with (\protect\subref{7a}) $B_n = 64$ bytes  and (\protect\subref{7b})  $B_n = 128$ bytes.}
		\label{fig7}
\end{figure}

Fig.~\ref{fig9} shows the time overflow rate for different schemes at various power levels in both P-CSI and I-CSI scenarios, with $\theta = 0.5$ and $\theta = 0.9$. These results, derived using maximum power, are independent of the actual power optimization process. Power optimization adjusts power to meet time constraints, defaulting to maximum power in worst-case scenarios, with overflow occurring if this level is insufficient.
In the I-CSI case with $L=4$, less than $1\%$ of time resources are allocated to training, with maximum power used during this phase. The impact of the training phase on average power transmission is minimal. Results for $\theta = 0.5$ and $\theta = 0.9$ in I-CSI are compared with the ideal P-CSI case, showing that a higher $\theta$ reduces the overflow rate, approaching the P-CSI scenario.
For $P_{\mathrm{max}} = 30$ dBm, all two-hop transmissions meet the minimum overflow rate requirement of $10^{-6}$ with $\theta = 0.5$, whereas single-hop transmissions achieve no better than $2 \times 10^{-4}$. In the P-CSI scenario, the `1 of 4' scheme reaches an overflow rate of $10^{-6}$ at $P_{\mathrm{max}} = 5$ dBm, while single-hop rates do not exceed 0.1, highlighting a significant difference.

Fig.~\ref{fig10} shows how the discount factor affects outage probability, with $\theta$ values ranging from 0.5 to 0.9. As $\theta$ approaches 1, the overflow rate decreases because fewer resources are allocated to device packets, but this increase in the discount factor also raises the outage probability. Conversely, a lower $\theta$ increases power consumption to meet stringent latency requirements. To maintain an acceptable overflow rate, as shown in Fig.~\ref{fig9}, the maximum transmit power $P_{\text{max}}$ is set at 25 dBm. The`1 of 3' and `1 of 4' schemes meet the $P_{\text{out}} < 10^{-6}$ constraint for $\theta$ values below 0.7 and 0.8, respectively, while other schemes do not. Notably, a discount factor between 0.7 and 0.8 only requires about a 1 dB increase in power compared to the P-CSI scenario, which is manageable due to the typically favorable propagation conditions in short-range subnetworks.
\begin{figure}
	\centering
	\includegraphics[width=0.92\linewidth]{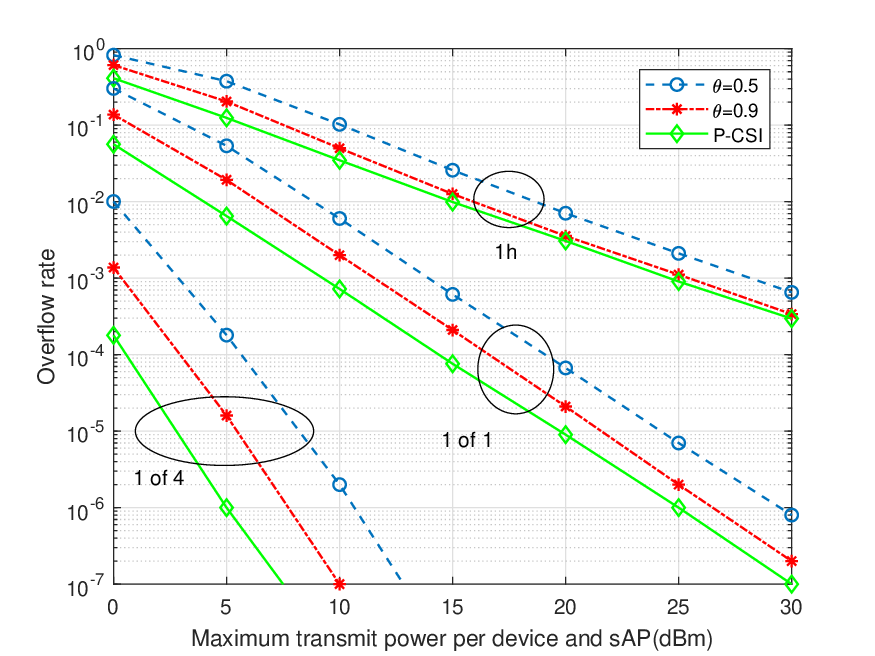}
	\caption{Overflow rate against maximum transmit power of different schemes for P-CSI case and I-CSI case with $L=4$.}.
	\label{fig9}	
\end{figure}
\begin{figure} 
	\centering
	\includegraphics[width=0.92\linewidth]{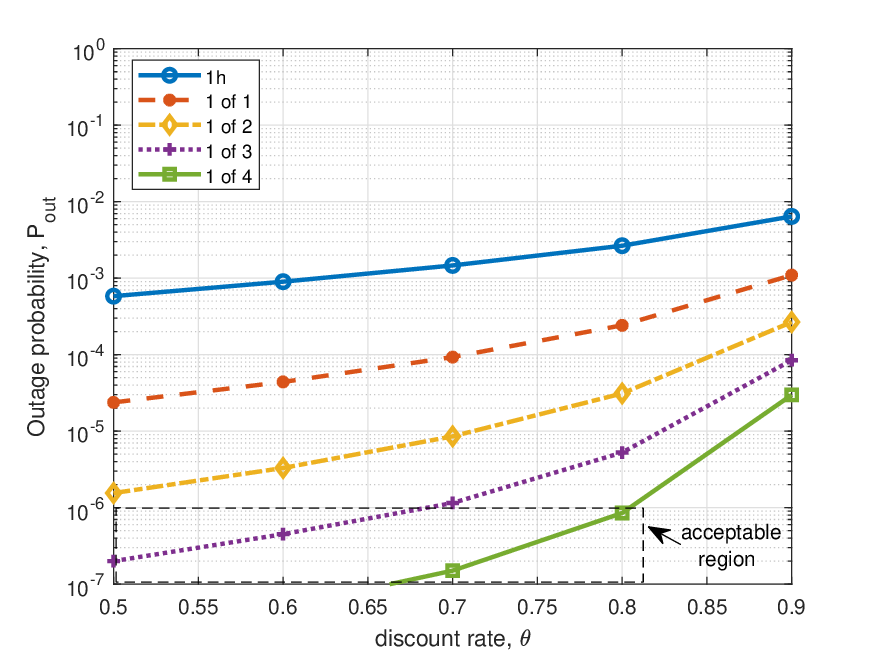}
	\caption{$P_{\text{out}}$ against $\theta$ for I-CSI case with $P_{\max}=25 $ dBm and $L=4$.}
	\label{fig10}	
\end{figure}

\subsection{RIS Performance}
In this subsection, we evaluate the performance of the RIS scenario and compare the results with the cooperative communication method. 

Fig. \ref{fig11} presents the results for RISs with 16 elements. To demonstrate the effectiveness of our approach in setting the phase of the RIS elements, we compare the results with those obtained using random phase settings. It is evident that the performance degrades significantly when random phases are used.
We also compare the deployment of single RIS (1-RIS) and four RISs (4-RIS) schemes. Leveraging four RISs increases the degree of freedom, resulting in a power saving of up to 3 dB. Finally, to illustrate the impact of the discount factor $\theta$ in the case of I-CSI, we present the results for $\theta = 0.5$ and $\theta = 0.9$. It is observed that when $\theta = 0.5$, the data rate decreases by a factor of 2, necessitating additional power to compensate for the loss.
\begin{figure} 
	\centering
	\includegraphics[width=0.92\linewidth]{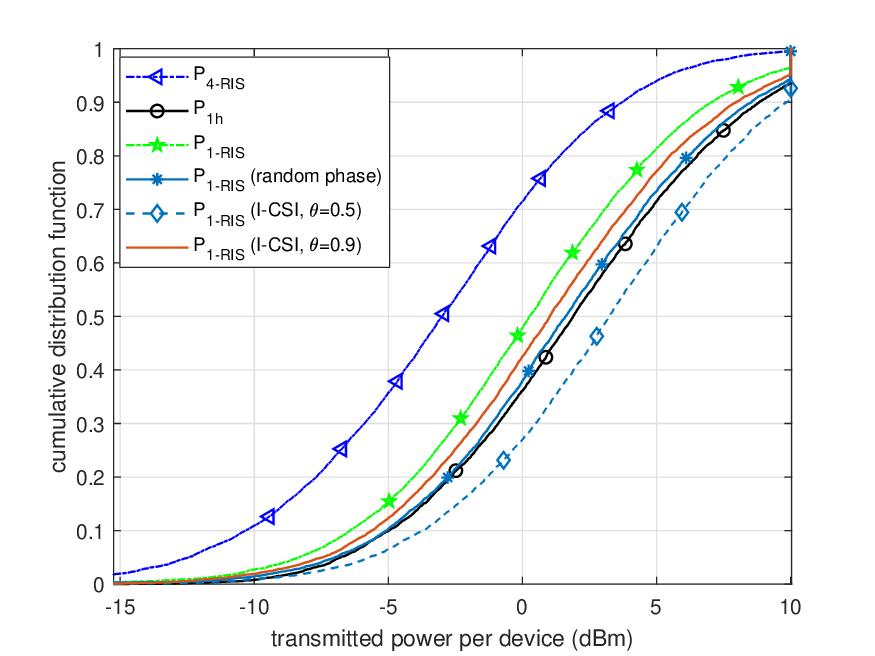}
	\caption{CDF of transmit power for the RIS scenario with different configurations using 16 elements.}
	\label{fig11}	
\end{figure}
Fig. \ref{fig12} illustrates the impact of the number of RIS elements on transmit power. Specifically, deploying 4 RISs with 64 elements significantly reduces transmit power, achieving a reduction of nearly 10 dB compared to single-hop transmission without RIS.
\begin{figure} 
	\centering
	\includegraphics[width=0.92\linewidth]{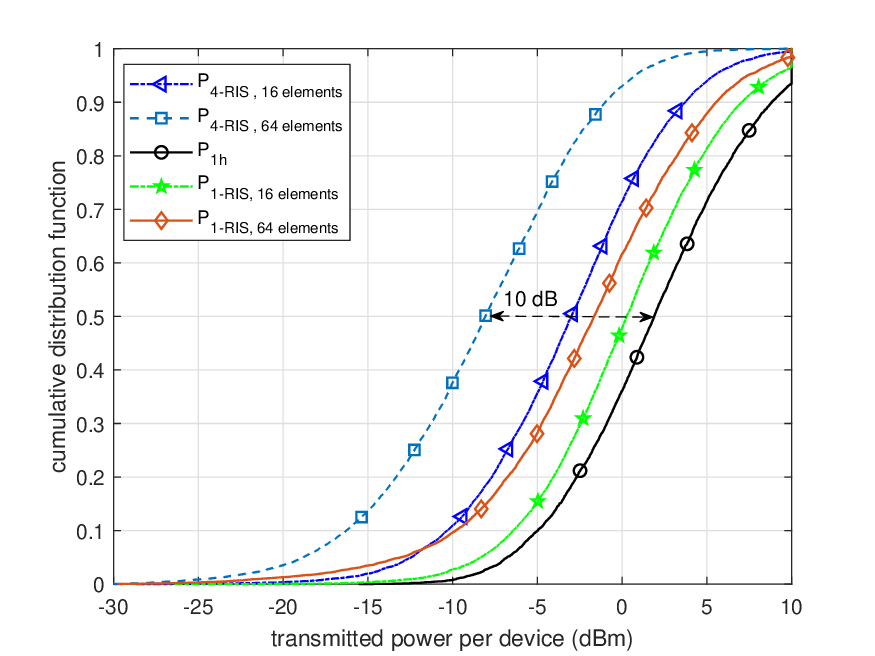}
	\caption{CDF comparison of transmit power for multiple RIS configurations with 16 and 64 elements.}
	\label{fig12}		
\end{figure}

Fig. \ref{fig13} compares the performance of the RIS scenario with relaying. Since RIS can be utilized only in a TDMA setup, we focus on comparing it with the relaying approach under the same conditions. The CDF curve of the `1 of 1' relaying scheme closely matches that of a RIS with 64 elements. However, deploying 4 RISs with 64 elements outperforms the `1 of 4' relaying scheme, achieving up to a 3 dB reduction in transmit power. Thus, Fig. \ref{fig13} highlights the potential of RIS to reduce transmit power compared to  relays.
It is worth noting that our goal is solely to minimize transmit power, which is distinct from minimizing overall power consumption. The power required to operate the RIS is excluded from our current analysis. Therefore, to ensure a fair comparison of power consumption, it is essential to take into account the total power, which will be addressed in future research.
\begin{figure} 
	\centering
	\includegraphics[width=0.92\linewidth]{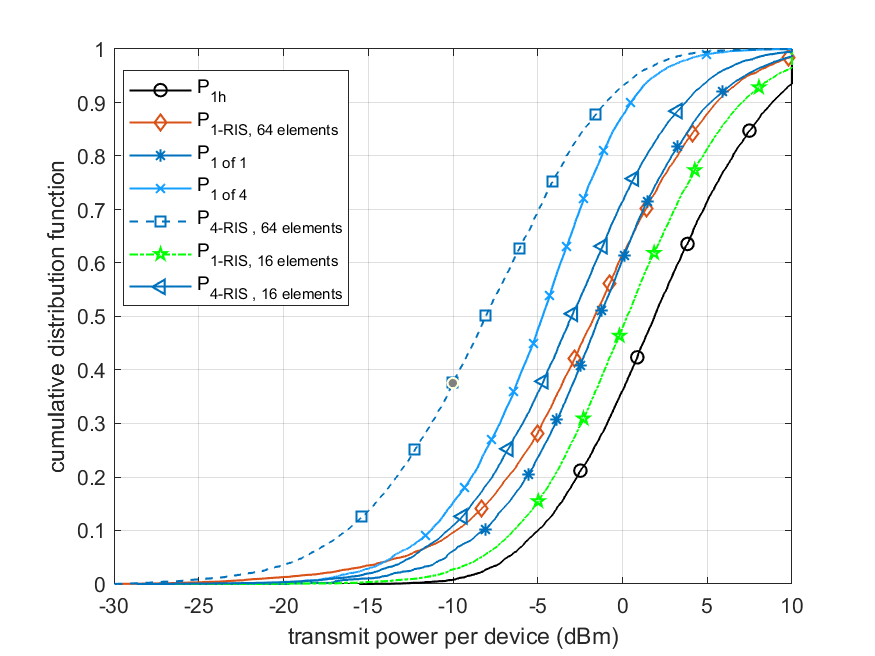}
	\caption{CDF comparison of transmit power for RIS versus relay.}
	\label{fig13}	
\end{figure}
Figs. \ref{fig14} and \ref{fig15} depict the overflow rate and $P_{\text{out}}$ as functions of $\theta$, respectively. It can be observed that relays outperform RIS in terms of reliability. Specifically, using two relays results in both $P_{\text{out}}$ and overflow rates dropping below $10^{-6}$, whereas even with 4 RISs with 16 elements, the overflow rates do not reach $10^{-6}$. This is primarily due to the passive nature of RIS, which cannot match the reliability of an active relay.
\begin{figure} 
	\centering
	\includegraphics[width=0.92\linewidth]{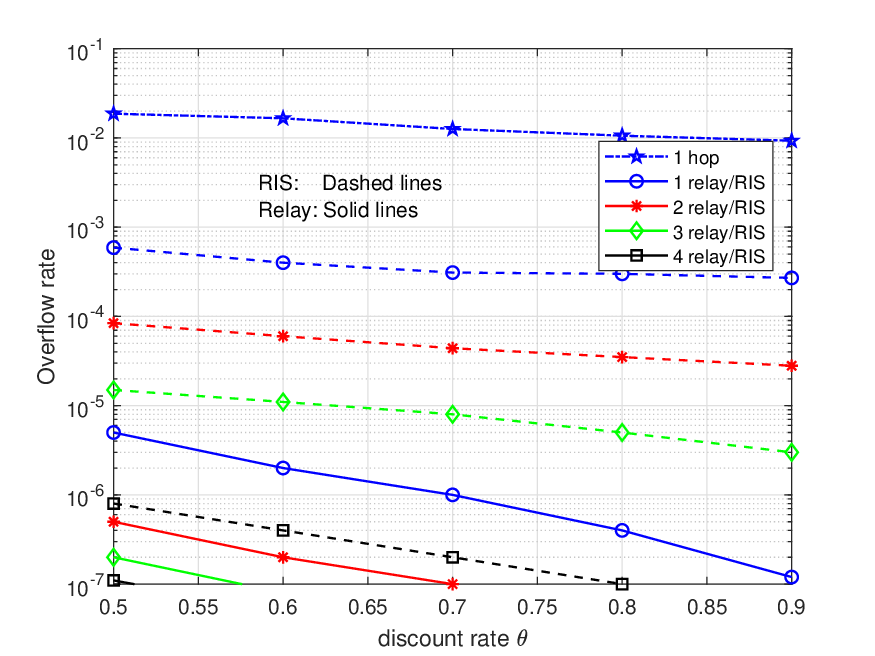}
	\caption{Overflow rate against $\theta$ for I-CSI case with $P_{\max}=23 $ dBm, $J_k=16$ and $L=17$.}
	\label{fig14}	
\end{figure}
\begin{figure} 
	\centering
	\includegraphics[width=0.92\linewidth]{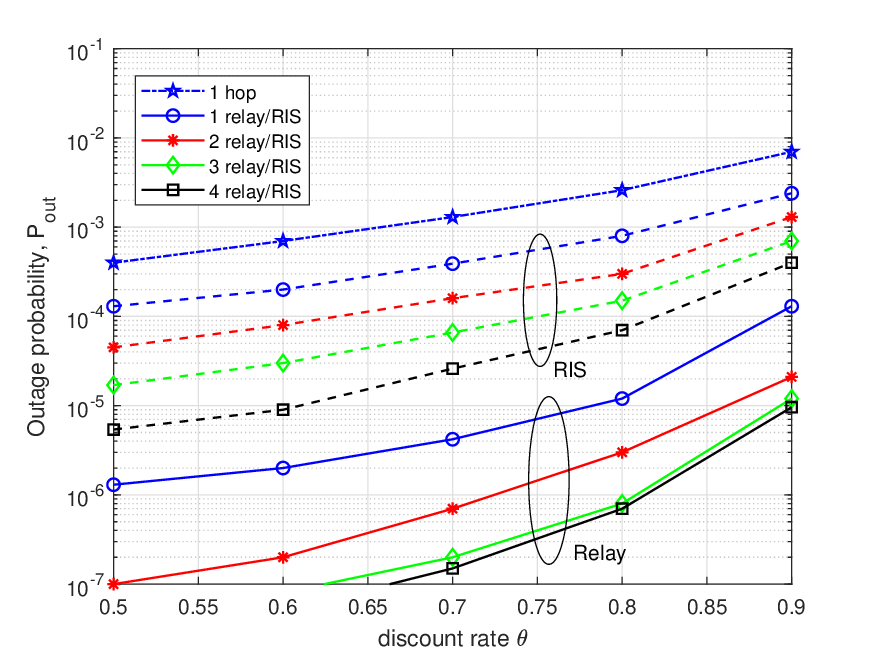}
	\caption{$P_{\text{out}}$ against $\theta$ for I-CSI case with $P_{\max}=23 $ dBm, $J_k=16$ and $L=17$.}
	\label{fig15}	
\end{figure}
By increasing the number of RIS elements and the corresponding training pilots, the results show improvement. In Figs. \ref{fig16} and \ref{fig17}, configurations with 3 and 4 RISs meet the reliability constraint of an outage rate less than $10^{-6}$. However, a single relay can satisfy this constraint with $\theta < 0.7$ using only 65 pilots.

In conclusion, deploying RISs offers a promising approach to reducing transmit power compared to relay-based systems. Specifically, four RISs with 64 elements can achieve up to 3 dB more power savings than an equivalent relay-assisted scenario while also meeting the URLLC requirements of our subnetwork.
However, reducing the number of RISs and/or their elements may compromise the stringent reliability requirements critical for IIoT subnetworks. For instance, four RISs with 16 elements cannot match the reliability of even a single relay in terms of outage probability.
\begin{figure} 
	\centering
	\includegraphics[width=0.92\linewidth]{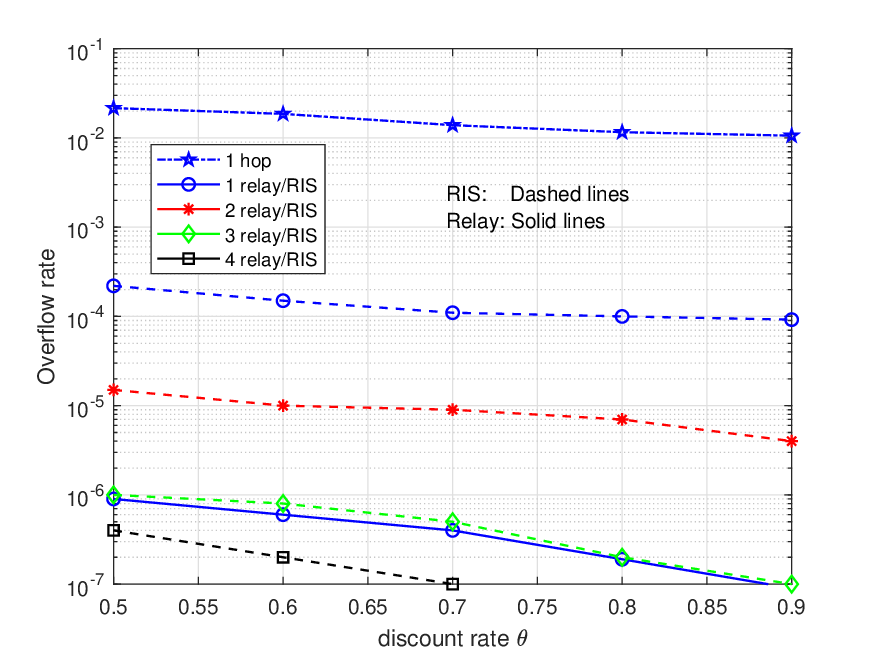}
	\caption{Overflow rate against $\theta$ for I-CSI case with $P_{\max}=23 $ dBm, $J_k=64$ and $L=65$.}
	\label{fig16}	
\end{figure}
\begin{figure} 
	\centering
	\includegraphics[width=0.92\linewidth]{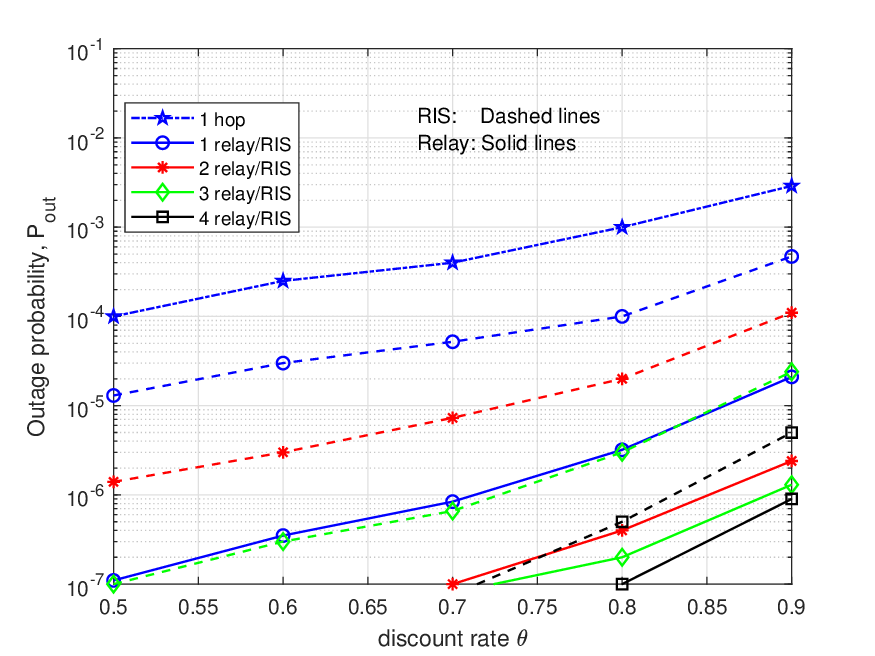}
	\caption{$P_{\text{out}}$ against $\theta$ for I-CSI case with $P_{\max}=23 $ dBm, $J_k=64$ and $L=65$.}
	\label{fig17}	
\end{figure}

\subsection{Discussion on Relay vs. RIS}
In this subsection, we build on the analysis from the previous sections to compare relay and RIS technologies, using both our simulation results and the protocols employed to generate them. Key points of this comparison are summarized as follows:
1) In our simulations, we used the TDMA scheme for both relay and RIS scenarios. While relay systems can operate under both TDMA and FDMA configurations, passive RIS is limited to TDMA due to its lack of frequency selectivity.
2) Our results (specifically in Figs. \ref{fig12} and \ref{fig13}) indicate that RIS significantly reduces transmission power compared to relay systems, owing to its passive nature. Unlike RIS, relays are active devices and inherently consume more power.
3) The DF relay process requires decoding and amplifying signals, which is inherently more complex than simply reflecting signals as RIS does. However, RIS systems require an additional layer of complexity in optimizing each element’s reflection coefficient for each IoT device, adding an optimization step before transmission.
4) Channel estimation is straightforward in the relay scenario, utilizing the LMMSE method. By contrast, channel estimation in the RIS scenario is more challenging due to the cascaded RIS channel, with complexity and overhead increasing as the number of RIS elements grows, as discussed in the channel estimation section.
5) RIS generally leads to lower power emission, which minimizes interference with neighboring subnetworks. Additionally, RIS beamforming directs power toward the intended device, reducing interference with other devices or subnetworks.
6) Our results show that while RIS-assisted approaches provide substantial power savings, relay-based methods offer superior reliability with lower outage probability. This analysis offers a practical basis for selecting the optimal approach based on specific power-saving or reliability requirements.

\section{Conclusions and Future Work}\label{conc}
This paper addressed the challenge of facilitating communication with strict cycle times for numerous devices within an IIoT  subnetwork. We proposed a solution that utilizes multiple sAPs with relaying capabilities and introduced novel TDMA and FDMA communication protocols supporting both single-hop and two-hop device communication. Moreover, we compared the relay-aided scenario with an equivalent RIS-assisted scenario to highlight the strengths and limitations of each approach.
A key focus was on minimizing power consumption while meeting stringent timing constraints.

Simulation results confirm the effectiveness of the proposed protocol. The `1 of 4' scheme reduced power emission by 4.5 dB compared to optimized single-hop transmission. In the P-CSI scenario, the overflow rate decreased from \(9 \times 10^{-4}\) in single-hop to \(10^{-6}\) with the `1 of 1' scheme at 25 dBm, while the `1 of 4' configuration achieved the same \(10^{-6}\) rate with only 5 dBm.
For the RIS scenario, deploying four RISs with 64 elements each reduced transmit power by nearly 10 dB compared to single-hop transmission without RIS. However, relay-based configurations met URLLC targets with just 17 training pilots, whereas four RISs with 16 elements each fell short. Increasing to three RISs with 64 elements matched relay reliability but required significantly more overhead—at least 195 training pilots.
These results underscore a trade-off: RIS-assisted approaches yield substantial power savings, while relay-based methods offer greater reliability, lower outage probability, and reduced overhead. Our findings provide a practical basis for selecting the optimal configuration based on specific needs for power efficiency, reliability, and training overhead.

Future work will extend this research by exploring relaying in IIoT subnetworks with mobile devices, where predicting CSI and accounting for blockage effects will be crucial for efficient resource allocation. The goal will be to further optimize energy efficiency or maximize the sum rate. We will also investigate potential improvements in latency, reliability, and data rate compared to current state-of-the-art solutions. Additionally, research will focus on integrating services with diverse requirements within the cooperative framework, leveraging flexible or full-duplex APs to enhance overall performance.

\begin{IEEEbiography}[{\includegraphics[width=1in,height=1.25in,clip,keepaspectratio]{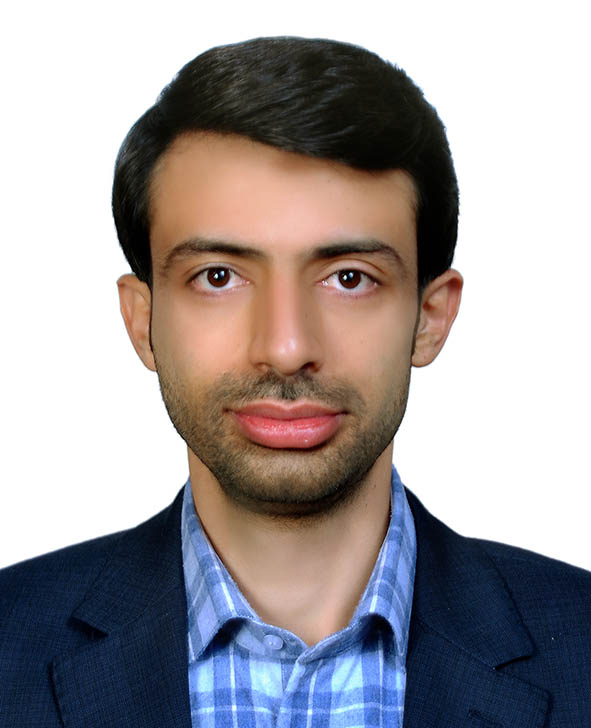}}]{Hamid Reza Hashempour}
	obtained his B.S., M.S., and Ph.D. degrees in Electrical Engineering from Shiraz University, Shiraz, Iran, in 2009, 2011, and 2017, respectively. Currently, he is a Postdoctoral Fellow at the Department of Electronic Systems at Aalborg University, Aalborg, Denmark.
	His research interests encompass wireless communication, physical-layer security, federated learning and 6G in-X subnetworks. In recognition of his work, his paper presented at the ICEE2017 conference was awarded the Best Paper of the Conference.
\end{IEEEbiography}
\begin{IEEEbiography}[{\includegraphics[width=1in,height=1.25in,clip,keepaspectratio]{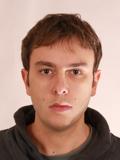}}]{Gilberto Berardinelli}
    received the first and second level degrees (cum laude) in telecommunication engineering from the University of L’Aquila, Italy, in 2003 and 2005, respectively, and the Ph.D. degree from Aalborg University, Denmark, in 2010. He is currently an Associate Professor with the Wireless Communication Networks (WCN) Section, Aalborg University, and external research engineer at Nokia Bell Labs. He has also been involved in multiple European projects, such as FANTASTIC5G, ONE5G, 5GSmartFact, and he is currently coordinator and technical manager of the Horizon Europe 6G-SHINE project, focused on pioneering technology components for short-range wireless communication with extreme requirements. He is author or co-author of more than 150 publications including journals, conferences, book chapters and patents applications. His current research interests are mostly focused on medium access control and radio resource management design for 6G systems and beyond. He is an IEEE Senior Member, and serves as a technical editor for IEEE Wireless Communications magazine.
\end{IEEEbiography}
\begin{IEEEbiography}[{\includegraphics[width=1in,height=1.25in,clip,keepaspectratio]{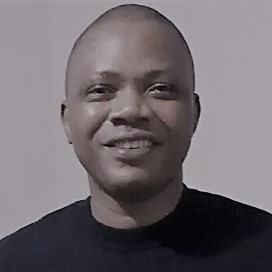}}]{Ramoni Adeogun}
	received the B.Eng. degree in electrical and computer engineering from the Federal University of Technology, Minna, Nigeria, and the Ph.D. degree in electronic and computer systems engineering from the Victoria University of Wellington, New Zealand. He is currently an Associate Professor and leader of the AI for Communications Research Group at Aalborg University, Denmark. In 2021, he was a visiting professor at the Communication Engineering department, Aalto University, Finland.   Prior to joining Aalborg University, he also worked in various positions at the University of Cape Town, South Africa; Odua Telecoms Ltd, Nigeria; Izon Science Ltd, Christchurch, New Zealand; and National Space Research and Development Agency, Nigeria. He has coauthored over 70 publications. His research activities focus on signal processing, machine learning and AI for PHY, MAC and RRM.
\end{IEEEbiography}
\begin{IEEEbiography}[{\includegraphics[width=1in,height=1.25in,clip,keepaspectratio]{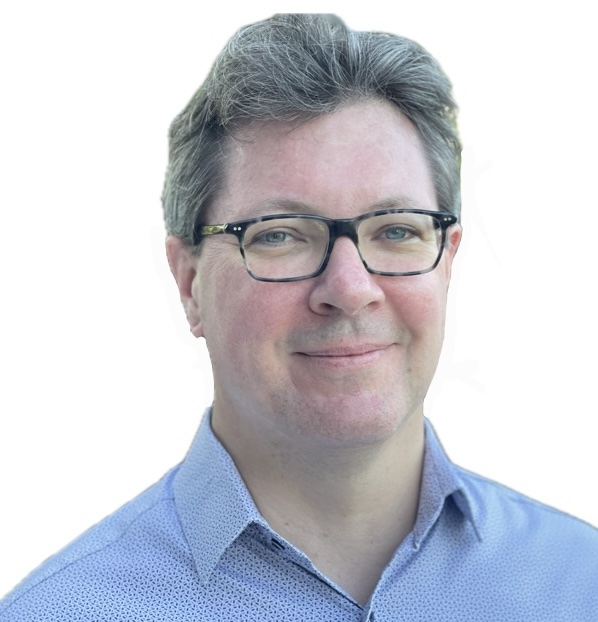}}]{Eduard A. Jorswieck (Fellow, IEEE)}
     received the Ph.D. degree in electrical engineering and computer science from TU Berlin in 2004. From 2006 to 2008, he was with the Signal Processing Group, KTH Stockholm, as a Post-Doctoral Fellow and an Assistant Professor. From 2008 to 2019, he was Professor and Chair for Communication Theory with TU Dresden. He is the Managing Director of the Institute of Communications Technology, the Head of the Chair for Communications Systems, and a Full Professor with Technische Universitat Braunschweig, Brunswick, Germany. He has published more than 190 journal articles, 18 book chapters, one book, four monographs, and some 300 conference papers. His main research interests are in the broad area of communications. He was co-recipient of the IEEE Signal Processing Society Best Paper Award. He and his colleagues were also recipients of the Best Paper and Best Student Paper Awards at the IEEE CAMSAP 2011, IEEE WCSP 2012, IEEE SPAWC 2012, IEEE ICUFN 2018, PETS 2019, ISWCS 2019, and IEEE ICC 2024. Since 2017, he has been the Editor-in-Chief of the EURASIP Journal on Wireless Communications and Networking. He is currently serving on the editorial boards of the IEEE TRANSACTIONS ON INFORMATION THEORY and IEEE TRANSACTIONS ON COMMUNICATIONS. He was on the editorial boards of the IEEE SIGNAL PROCESSING LETTERS, the IEEE TRANSACTIONS ON SIGNAL PROCESSING, the IEEE TRANSACTIONS ON WIRELESS COMMUNICATIONS, and the IEEE TRANSACTIONS ON INFOR- MATION FORENSICS AND SECURITY.
\end{IEEEbiography}

\end{document}